\newtheorem{definition}{Definition}
\newtheorem{lemma}{Lemma}
\newenvironment{proof}{\noindent \textit{Proof:}}{\hfill$\square$}
\newcommand{\red}[1]{\textcolor{red}{#1}}
\definecolor{darkgreen}{rgb}{0.0, 0.5, 0.0}
\newcommand*\colourcheck[1]{%
  \expandafter\newcommand\csname #1check\endcsname{\textcolor{#1}{$\checked$}}%
}
\definecolor{darkred}{rgb}{0.9, 0.17, 0.31}
\newcommand*\colourcross[1]{%
  \expandafter\newcommand\csname #1cross\endcsname{\textcolor{#1}{$\times$}}%
}
\newcommand*\colourtilde[1]{%
  \expandafter\newcommand\csname #1tilde\endcsname{\textcolor{#1}{$\thicksim$}}%
}
\newlength\mylena
\newlength\mylenb
\newcommand\mystrut[1][2]{%
    \setlength\mylena{#1\ht\@arstrutbox}%
    \setlength\mylenb{#1\dp\@arstrutbox}%
    \rule[\mylenb]{0pt}{\mylena}}
\definecolor{darkorchid}{rgb}{0.6, 0.2, 0.8}\xspaceaddexceptions{[]}
\newcommand\var[1]{\texttt{#1}}
\newcommand\extendableIndex{\var{extendable}\xspace}
\newcommand\extendableIndexbis{\var{nextextendable}\xspace}
\newcommand\extendableIndexbisv{\var{nextextendable\_v}\xspace}
\newcommand\vardist{\var{dist}\xspace}
\newcommand\vardistv{\var{dist\_v}\xspace}
\newcommand\varpfront{\var{pfront}\xspace}
\newcommand\varpfrontiv{\var{pfront\_iv}\xspace}
\newcommand\varpfcand{\var{pf\_cand}\xspace}
\newcommand\varn{\var{n}\xspace}
\newcommand\varu{\var{u}\xspace}
\newcommand\varl{\var{l}\xspace}
\newcommand\vard{\var{d2}\xspace}
\newcommand\vardd{\var{d1}\xspace}
\newcommand\vardu{\var{d2u}\xspace}
\newcommand\varddu{\var{d1u}\xspace}
\newcommand\vardv{\var{d2v}\xspace}
\newcommand\varddv{\var{d1v}\xspace}
\renewcommand\varv{\var{v}\xspace}
\newcommand\vari{\var{i}\xspace}
\newcommand\varimax{\var{imax}\xspace}
\newcommand\varnb{\var{nb}\xspace}
\newcommand\vardlist{\var{d\_list}\xspace}
\newcommand\varlastd{\var{last\_d2}\xspace}
\newcommand\varmaxd{\var{max\_d1}\xspace}
\newcommand{\ie}{\emph{i.e.,~}}
\newcommand{\eg}{\emph{e.g.,~}}
\let\oldnl\nl
\newcommand{\nonl}{\renewcommand{\nl}{\let\nl\oldnl}}
\newcommand\bestcop{\var{BEST2COP}\xspace}
\newcommand\bestcope{\var{BEST2COP-E}\xspace}
\newcommand\samcra{\var{SAMCRA}\xspace}
\newcommand\samcrasrg{\var{SAMCRA+SRG}\xspace}
\newcommand\samcralca{\var{SAMCRA+LCA}\xspace}
\renewcommand\red[1]{#1}
\NewDocumentCommand{\makepath}{ m }
 {
 \ensuremath{
      \clist_set:Nn \mylist { #1 }
      \clist_pop:NN \mylist \listtmpvar
      \int_do_until:nNnn {\clist_count:N \mylist} = {0} {

          \listtmpvar{-}
          \clist_pop:NN \mylist \listtmpvar
      }
      \listtmpvar
  }
 }
\NewDocumentCommand{\makesrpath}{ m }
 {
 \ensuremath{
      \clist_set:Nn \mylist { #1 }
      \clist_pop:NN \mylist \listtmpvar
      \int_do_until:nNnn {\clist_count:N \mylist} = {0} {

          \listtmpvar{|}
          \clist_pop:NN \mylist \listtmpvar
      }
      \listtmpvar
  }
 }
\begin{document}

\title{Deploying Near-Optimal Delay-Constrained Paths with Segment Routing in Massive-Scale Networks}

\affiliation[1]{organization={ICube, University of Strasbourg},
country={France}}
\affiliation[2]{organization={Cisco Systems},
country={California}}

\author[1]{Jean-Romain Luttringer\corref{cor1}}
\ead{jr.luttringer@unistra.fr}

\author[1]{Thomas Alfroy}
\ead{talfroy@unistra.fr}

\author[1]{Pascal Mérindol}
\ead{merindol@unistra.fr}

\author[1]{Quentin Bramas}
\ead{bramas@unistra.fr}

\author[2]{François Clad}
\ead{fclad@cisco.com}

\author[1]{Cristel Pelsser}
\ead{pelsser@unistra.fr}

\cortext[cor1]{Corresponding author}

\begin{abstract}
  With a growing demand for quasi-instantaneous communication services such
  as real-time video streaming, cloud gaming, and industry 4.0
  applications, multi-constraint Traffic Engineering (TE) becomes increasingly
  important. While legacy TE management planes like MPLS have proven
  laborious to deploy, Segment Routing (SR) drastically
  eases the deployment of TE paths and is thus increasingly adopted by Internet Service Providers (ISP).
  There is now a clear need in computing and deploying Delay-Constrained Least-Cost
  paths (DCLC) with SR for real-time interactive services requiring both low delay and high bandwidth routes.
  However, most current DCLC solutions are not tailored for SR.
  They also often lack efficiency (particularly exact schemes) or guarantees (by relying on unbounded heuristics). Similarly to approximation schemes, we argue that the
  actual challenge is to design an algorithm providing both performances and strong guarantees. However, conversely to most of these schemes, we also consider operational constraints to provide a practical, high-performance implementation.

  In this work, \red{we extend and further evaluate our previous contribution, \bestcop. \bestcop leverages inherent limitations in the accuracy of delay measurements, accounts for the operational constraint added by SR, and provides guarantees and bounded computation time in all cases thanks to simple but efficient data structures and amortized procedures. 
  We show that \bestcop is faster than a state-of-the-art algorithm on both random \emph{and} real networks of up to \num{1000} nodes. Relying on commodity hardware with a single thread, our algorithm retrieves all
  non-superfluous 3-dimensional routes in under 100ms in both cases.
  This execution time is further reduced using multiple threads, as the design of \bestcop enables a speedup almost linear in the number of cores thanks to a balanced computing load.}
  Finally, we extend \bestcop to deal with massive-scale ISP by leveraging the multi-area partitioning of these deployments.
  Thanks to our new topology generator specifically designed to model realistic patterns in such massive IP networks, we show that \bestcop can solve DCLC-SR in approximately 1 second even for ISP having more than \num{100000} routers.
\end{abstract}

\begin{keyword}
Traffic Engineering \sep Segment Routing \sep DCLC \sep CSP \sep Delay Constrained Least Cost \sep QoS Routing
\end{keyword}

\maketitle

\section{Introduction}\label{sec:intro}


Latency is critical in modern networks for various applications. The constraints on the delay are indeed increasingly stringent.
For example, in financial networks, vast amounts
of money depend on the ability to receive information in real-time.
Likewise, technologies such as 5G slicing, in addition to requiring significant bandwidth availability,
demand strong end-to-end delay guarantees depending on
the service they aim to provide, e.g., less than 15ms for low latency applications such as motion control for industry 4.0, VR, or video games~\cite{5GSlicing}.
For such interactive applications, the latency is as critical as the \textit{IGP cost}.

This Interior Gateway Protocol cost is defined as an additive metric that usually reflects both the
link's bandwidth and the operator's load
distribution choices on the topology.
Paths within an IGP are computed by minimizing this cost.
Thus, although delay constraints are increasingly important, they should not be enforced
to the detriment of the IGP cost.
With minimal IGP distances, the traffic benefits from high-bandwidth links and follows the
operator's intent in managing the network and its load.
With bounded delays, the traffic can benefit from paths allowing for sufficient
interactivity.
It is thus relevant to minimize the IGP cost while enforcing an upper constraint on the latency.
Computing such paths requires to solve DCLC, an NP-Hard problem standing for Delay Constrained Least Cost.\\

\begin{figure}[!ht]
    \centering
    \includegraphics[scale=0.3]{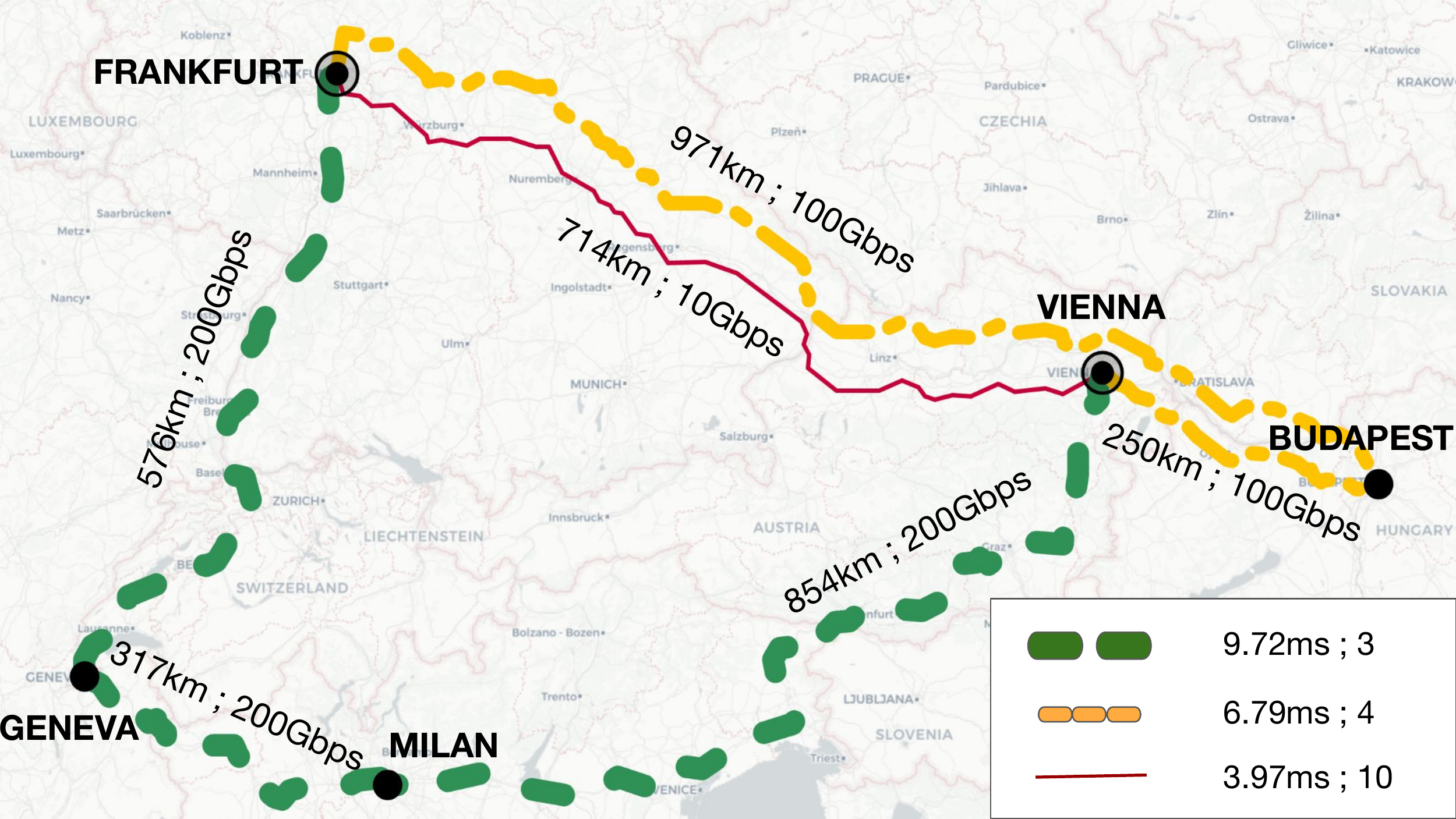}
    \caption{Practical relevance of DCLC in the GEANT network. IGP costs are deduced from the bandwidth of each link. Depending on their needs (in terms of delay and bandwidth), applications can opt for three non-comparable paths between Frankfurt and Vienna.}
    \label{fig:DCLCGeant}
\end{figure}

\textbf{DCLC: a relevant issue.}
\red{Although one may expect the two metrics to be strongly correlated in practice, there are various cases where the delay and the IGP cost may be drastically different. For example, the IGP cost may have been tuned arbitrarily by the operator. Heterogeneous infrastructures between countries or geographical constraints may also create this effect. This can be illustrated on real networks, as displayed by
Fig.~\ref{fig:DCLCGeant}.}
This map is a sample of the GEANT transit network~\cite{GeantTopo}.
As fibers often follow major roads, we rely on real road distances to infer the propagation delay of each link while the bandwidth, and so the estimated IGP cost, matches the indications provided by GEANT. 
A green link has an IGP cost of 1 while the IGP cost is 2 and 10 respectively for the yellow and pink ones.


\red{Note that the two metrics are not correlated, hence all three paths shown between Frankfurt and Vienna offer diverse interesting options. They are non-comparable (or \textit{non-dominated}) paths and form the \textit{Pareto-front} of the paths between the two cities.}
Either solely the delay matters and the direct link (in pink) should be preferred, or the ISP prefers to favor high capacity links, and the green path, minimizing the IGP cost, should be used.
The yellow path, however, offers an interesting compromise.
Out of all paths offering a latency well-below 10ms, it is the one minimizing the IGP cost. Thus,
it allows to provide strict Service-Level Agreement
($< 9$ms), while considering the IGP cost.
These kind of paths, retrieved by solving DCLC, provide more options by enabling tradeoffs between the two most important networking metrics. Applications such as videoconferences, for example, can then benefit both from real-time interactive voice exchange (delay) and high video quality (bandwidth).
In addition, IGP costs are also tuned to represent the operational costs. Any
deviation from the shortest IGP paths thus results in additional costs for the operator. For all these reasons, there exist a clear interest
for algorithms able to solve and deploy DCLC paths~\cite{SegmentR71:online}.
However and so far, while this problem has received a lot of attention in the last decades from the network research community~\cite{survey2018,survey2010}, no technologies were available for an efficient deployment of such paths.\\

\red{\textbf{Segment Routing, MSD \& DCLC's (large-scale) rebirth.} Segment Routing (SR) is a vibrant technology gathering traction from router vendors,
network operators and academic communities \cite{I-D.matsushima-spring-srv6-deployment-status,ventre_segment_2020}.
Relying on a combination of strict and loose source routing, SR enables to deviate the traffic from
the shortest IGP paths through a selected set of nodes and/or links by prepending routing instructions to the packet itself}.
Such deviations may for example allow to route traffic through a path with lower latency.
These deviations are encoded in the form of \textit{segments} within the packet itself. To prevent any packet forwarding degradation, the number of deviations \red{(\ie instructions)} one can encode is limited \red{to MSD (Maximum Segment Depth), whose exact value depends on the hardware}. While this technology is adequate to support
a variety of services, operators mainly deploy SR in the hopes of performing fine-grained and ECMP\footnote{Equal Cost Multi-Path}-friendly tactical Traffic-Engineering (TE)~\cite{SR4TE}, due to its reduced overhead compared to RSVP-TE \cite{filsfils2017segment}. Our discussion with network vendors further revealed a clear desire from operators to efficiently compute DCLC paths deployable with Segment Routing~\cite{SegmentR71:online}. Such a solution should thus not only
encompass Segment Routing, but also fare well on large-sized networks of several thousand of nodes, \red{as already observable in current SR deployments\cite{I-D.matsushima-spring-srv6-deployment-status}. Indeed, while we showed in~\cite{nca2020} that computing DCLC paths for Segment Routing (DCLC-SR) is possible in far less than a second on networks of up to $1000$ nodes, scaling to ten or a hundred times more routers remains an open issue.\\}


\textbf{Challenges.}
\red{In this paper, we extend our previous contribution, \bestcop (Best Exact Segment Track for the 2-Constrained Optimal Paths)~\cite{nca2020} to make DCLC-TE possible on (very) large scale networks.}
\red{In order to achieve our goal, we solve several challenges: (i) efficiently cap, or even minimize, the number of segments
to consider the packet manipulation overhead supported by routers, (ii) scale with very large modern networks, and (iii) provide near-exact algorithms
with bounded error margin and strong guarantees despite the difficulty induced by considering
three metrics (cost, delay, and number of segments)\footnote{While difficult instances are unlikely to occur in practice, our algorithm can tackle even such corner cases. Thus, our proposal is not only efficient in practice but provably correct and efficient even for worst theoretical cases.}. We now briefly explain the main design aspects allowing us to solve these challenges. The latter will be explained more thoroughly in their dedicated sections.} \\

\textbf{Efficiently encompassing the Maximum Segment Depth constraint (MSD).}
\red{An SR router can only prepend up to \textit{MSD} routing instructions to a packet at line-rate, i.e., $\approx 10$ with the best current hardware.}
\red{Although we find in our new study, shown in Section~\ref{ssec:nbseg} that this limit does not prevent from deploying most DCLC paths in practice (if not all in easy cases), this constraint must still be taken into account. If ignored, the computed paths have no guarantees to be deployable, as they may exceed MSD.}

\red{While this adds an additive metric to consider (the number of segments), \bestcop manages, through adequate data-structures and graph exploration, to natively manipulate the list of segments  
and ensure that paths requiring more than MSD segments are natively removed from the exploration space.}\\


\textbf{Towards massive-scale networks.}
\red{Our new proposal, \bestcope (\bestcop Extended), aims
at enabling fine-grained TE on massive-scale networks efficiently using a divide-and-conquer approach.}
Indeed, massive networks usually rely on a standard
physical and logical partitioning, as IGP protocols do not scale well as is.
By leveraging this decomposition and re-designing \bestcop to benefit from multi-threaded architectures, it becomes possible to solve DCLC-SR in a time suited for real-time routing.
To evaluate our contribution, we create a topology generator, YARGG, able to construct realistic massive-scale, multi-valuated, and multi-area topologies based on geographical data. In such cases, our extension solves DCLC-SR in $\approx$ 1 second for $\approx$ \num{100000} nodes.\\

\red{\textbf{Bounded error margin and practical concerns.}}
\red{DCLC is a well-known NP-Hard problem~\cite{536364}.
While there exist several ways to solve DCLC ~\cite{survey2018, survey2010}, they usually do not consider the underlying deployment technologies and real-life deployment constraints. We keep the latter at the core of our design. The nature of the concerned  TE paths allows us to consider a stable latency metric (the propagation delay). This is essential as
unstable metrics should not be considered nor advertised when performing routing~\cite{RFC7471,RFC8570}.
Furthermore, because of the arbitrary nature of the latency constraint and the inherent imprecision of
the delay measurement, we argue that an acceptable error margin regarding the delay constraint is acceptable
(more so than on the IGP cost). Our algorithm is designed to take advantage, if needed, of this acceptable
margin to returns the DCLC path efficiently in all situations to all destinations, with strong guarantees.\\}

\textbf{Summary and Contributions.}
By taking into consideration the operational deployment of constrained paths with SR, the current scale and structure of modern networks, as well as the practicality of the delay measures and constraints, \textbf{we designed BEST2COPE (BEST2COP Extended), a simple efficient algorithm able to solve DCLC-SR in $\approx$ 1s in large networks of $\approx$\num{100000} nodes}.

The main achievements of our proposal follow the organization of this paper:
\begin{itemize}
    \item In Section~\ref{sec:background}, we present SR in further details. \red{Compared to~\cite{nca2020}, we discuss the context and works related to DCLC in further detail. We also add an evaluation regarding the relevance of SR for deploying DCLC paths};
    \item In Section~\ref{sec:best2cop}, we formalize DCLC-SR and also its generalization (2COP). In particular, we describe the network characteristics we leverage and define the construct we use to encompass SR (in Sec.~\ref{sec:true} and Sec.~\ref{sec:sr-graph} respectively). \red{In Sec.~\ref{sec:flat}, we briefly summarize \bestcop (initially introduced in~\cite{nca2020} and detailed here in an appendix) and emphasize how it can easily benefit from multi-threaded architectures}.
\end{itemize}

\red{Besides the more detailed background and the evaluation of SR relevance, the main new contributions in this extension come in the last three sections:}

\begin{itemize}
    \item In Sec.~\ref{sec:areas}, we extend \bestcop to \bestcope, to deal with massive scale networks relying on area partitioning (as with OSPF with a single metric), making it able to solve DCLC efficiently in graphs of $\approx \num{100000}$ nodes;
    \item In Sec.~\ref{sec:complexity}, we formally define the guarantees brought by \bestcop and its versatility to solve several TE optimization problems at once (\ie the sub-problems of 2COP), as well as its polynomial complexity;
    \item Finally, in Section~\ref{sec:evaluation}, we
    present our large-scale topology generator, YARGG, and evaluate \bestcope on the resulting topologies and compare our proposal to the relevant state-of-the-art path computation algorithm.
\end{itemize}

\section{Back to the Future: DCLC vs SR}\label{sec:background}

\subsection{Segment Routing Background and Practical Usages}

\red{Segment Routing implements source routing by prepending packets with a stack of up to MSD \textit{segments}. In a nutshell, segments are checkpoints the packet has to go through.
There are two main types of segments}:
\begin{itemize}
    \item \textbf{Node segments.} A node segment $v$ indicates that the packet should (first) be forwarded to $v$ with ECMP (instead of its final IP destination). Flows are then load-balanced among the best IGP next hops for destination $v$.
    \item \textbf{Adjacency segments.} Adjacency segments indicate that the packet should be forwarded through a specific interface and its link.
\end{itemize}

\red{Once computed, the stack of segments encoding the desired path is added to the packet. Routers forward packets according to the topmost segment, which is removed from the stack when the packet reaches the associated intermediate destination.}


Adjacency segments may be globally advertised, and thus be used the same way as node segments, or they may only have a local scope and, as such, can only be interpreted by the router possessing said interface. In this case, the packet should first be guided to the corresponding router, by prepending the associated node segment. \red{In the following, as a worst operational case, we consider the latter scenario, as it
always requires the highest number of segments.}


Segment Routing attracted a lot of interest from the research community. A table
referencing most SR-related work can be found in Ventre et al.~\cite{ventre_segment_2020}. While some SR-TE works are related to tactical TE problems (like minimizing the maximum link utilization) taking indirectly into account some delay concerns~\cite{swarm, 10.1007/978-3-319-23219-5_41}, most of the works related to SR do not focus on DCLC, but rather bandwidth optimization ~\cite{7218434,Gay2017ExpectTU, Loops}, network resiliency \cite{8406885, 7524551}, monitoring~\cite{SRv6PM, 7524410}, limiting energy consumption ~\cite{7057272} or path encoding (the translation of path to segment lists)~\cite{guedrez2016label,7417097}. Aubry~\cite{AubryPhd}
proposes a way to compute paths requiring less than MSD segments while optimizing
an additive metric in polynomial time. The number of segments required is then evaluated. This work,
however, considers only a single metric in addition to the operational constraints.
The problem we tackle
(\ie DCLC paths for Segment Routing) deals with two metrics (in addition to the operational constraints). This additional dimension drastically changes the problem,
which then becomes NP-Hard.
Some works use a construct similar to ours (presented in details in Sec.~\ref{sec:sr-graph}) in order to prevent the need to perform conversions from network paths to segment lists, \cite{Lazzeri2015EfficientLE} in particular. However, the authors of \cite{Lazzeri2015EfficientLE} do not pretend to solve DCLC and, as such, do not tune the structure the same way
(\ie they do not remove dominated segments, as explained later on), and simply use their construct to sort paths lexicographically.

As aforementioned, while operators seem to mainly deploy SR to perform fine-grained TE, to the best of our knowledge, no DCLC variant exists for specifically tackling SR characteristics and constraints (except for our contribution).
Using segments to steer particular flows allows however to deviate some TE traffic from the best IGP paths
in order to achieve, for example, a lower latency (and by extension solve DCLC). A realistic example is shown on Fig.~\ref{fig:DCLCGeant} where the node segment \textit{Vienna}, as well as considering Vienna as the destination itself, would result in the packets following the best IGP path from Frankfurt to Vienna, i.e., the green dashed path.
To use the direct link instead (in plain pink) and so minimize the delay between the two nodes of this example, the associated adjacency segment would have to be used as it enforces a single link path having a smaller delay than the best IGP one (including here two intermediary routers).
Finally, the yellow path, offering a non dominated compromise between both metrics (and being the best option if considering a delay constraint of 8ms), requires the use of the node segment \textit{Budapest} to force the traffic to deviate from its best IGP path in green.
Before converting the paths to segment lists (and actually deploy them with SR), such non-dominated paths need first to be explored.
Computing these paths while ensuring that the number of segments necessary to encode them remains under MSD is at least as difficult as solving the standard DCLC problem since an additional constraint now applies.




\subsection{DCLC (Delay-Constrained, Least-Cost), a Well-known Difficult Problem having many Solutions?}

DCLC belongs to the set of NP-Hard problems (as well as most related multi-constrained path problems).
Intuitively, solely extending the least-cost path is not sufficient, as the latter may exceed the delay constraint.
Thus, paths with greater cost but lower delays must be memorized and extended as well.
These \textit{non-dominated paths} form the \textit{Pareto front} of the solution, whose size may grow exponentially with respect to the size of the graph.
However, DCLC in particular, and related variants and extensions in general, does possess several interesting applications such as mapping specific flows to their appropriate paths (in terms of interactive quality).
Thus, these problems have been extensively studied in the past decades.
Many solutions have been proposed so far, as summarized in these surveys~\cite{survey02, survey2010, survey2018}: they range from to heuristics and approximations to exact algorithms, or even genetic approaches.\\

\textbf{Heuristics.} Because DCLC is NP-Hard, several polynomial-time heuristics have been designed to limit the worst-case computing time, but at the detriment of any guarantees.
For example, \cite{Fallback} only returns the least-cost or least-delay path
if one is feasible (\ie respect the constraints).
More advanced proposals try to explore the delay and cost space simultaneously, by either combining in a distributed manner the least-cost and least-delay subpaths~\cite{DCUR, DCR, SFDCLC} or by aggregating both metrics into one in a
more or less intricate manner.

Aggregating metrics in a linear fashion~\cite{Jaffe, Aneja_Nair_1978}
preserves the \textit{subpaths optimality principle} (isotonicity of best single-metric paths) and therefore allows to use standard shortest paths
algorithms. However, it leads to a loss of relevant information regarding the quality
and feasibility of the computed paths~\cite{536364}, in particular if the hull of the Pareto Front is not convex. Some methods try to mitigate this effect by using a $k$-shortest path approach to possibly find more feasible paths~\cite{DCBFKLARAC, larac}, but such an extension may result in a large increase of execution time and may not provide more guarantees.
Other heuristics rely on non-linear metric aggregation.
While it seems to prevent loss of relevant information, at first glance, such algorithms expose themselves to maintain all non-dominated paths (towards all nodes) as the isotonicity does not hold anymore (while it holds with linear metrics). Since the Pareto Front may be exponential with respect to the size of the graph, those algorithms either simply impose a hard limit on the number of paths that can be maintained (\eg TAMCRA~\cite{TAMCRA} and LPH~\cite{EBFA01}), or specifically chose the ones to maintain through previously acquired knowledge (HMCOP~\cite{Korkmaz_Krunz_2001}).
Finally, other works like~\cite{Feng_Makki_Pissinou_Douligeris_2002, Guo_Matta_1999} rely on heuristics designed to solve a variant of DCLC, the MCP problem (Multi-Constrained Paths, the underlying NP-Complete decision version of DCLC -- with no optimization objective). It mainly consists of sequential MCP runs using a conservative cost constraint iteratively refined.

Relying on heuristics is tempting, but their lack of guarantees can prevent to enforce strict SLAs even when a suitable path actually exists. 
One can argue it is particularly unfortunate, as DCLC is only \textit{weakly} NP-hard: it can be solved exactly in pseudo-polynomial time, \ie polynomial in the numerical value of the input~\cite{gareycomputers}.
Said otherwise, DCLC is polynomial in the smallest largest weight of the two metrics once translated to integers.
Consequently, it is possible to design FPTAS\footnote{Fully Polynomial Time Approximation Scheme.} solving DCLC while offering strong guarantees~\cite{Papadimitriou_Yannakakis_2000}.\\

\textbf{Approximations.}
\red{Numerous FPTAS have been proposed to solve DCLC and related constrained shortest path problems. All the following algorithms fall into this category.}
The common principle behind these schemes is to reduce the precision (and/or magnitude) of the considered metrics.
This can be performed either directly, by \textit{scaling and rounding} the weights of each link, or indirectly,
by dividing the solution space into intervals and only maintaining paths belonging to different intervals (\textit{Interval partitioning})~\cite{Sahni_1977}.
Scaling methods usually consider either a high-level dynamic programming scheme or a low-level practical Dijkstra/Bellman-Ford core with pseudo-polynomial complexity, and round the link costs to turn their algorithms into an FPTAS (see for example Hassin~\cite{Hassin_1992}, Ergun et al.~\cite{Ergun_Sinha_Zhang_2002} or Lorenz and Raz~\cite{Lorenz_Raz_2001} methods).
Goel et al.~\cite{goel}, in particular, chose to round the delay instead of the cost and can consider multiple destinations (as our own algorithm).

Most interval partitioning solutions explore the graph through a Bellman-Ford approach.
The costs of the paths are mapped to intervals, and only the path with
the lowest delay within a given interval is kept. The size of the intervals
thus introduces a bounded error factor~\cite{Hassin_1992, Tsaggouris_Zaroliagis_2009}.
In particular, HIPH~\cite{HIPH} offers a dynamic approach between an approximation and exact scheme. It proposes to maintain up to $x$ non-dominated paths for each node and stores eventual additional paths using an interval partitioning strategy. This allows the algorithm to be exact on simple instances (resulting in a limited Pareto front, \ie polynomial in the number of nodes, in particular when it is bounded by $x$) and offer strong guarantees on more complex ones. This versatility is an interesting feature, as most real-life cases are expected to be simple instances with a bounded Pareto front size, in particular because one of the metrics may be coarse by nature. 
For these reasons, not only approximation schemes can offer practical solutions (with a bounded margin error) but also exact algorithms (with controlled performance), as they may be viable in terms of computing time for simple real-life IP networking instances.\\


\textbf{Exact methods.} Numerous exact methods have indeed also been studied extensively to solve DCLC.
Some methods simply use a $k$-shortest path approach to list all paths within the Pareto front~\cite{Namorado_Climaco_Queiros_Vieira_Martins_1982, Paixao_Santos_2008}.
On the other hand, Constrained Bellman-Ford~\cite{CBF94} (ironically, also called Constrained Dijkstra as it uses a priority queue -- denoted PQ in the following) explores paths by increasing delays and lists all non-dominated paths towards each node.
Several algorithms use the same principle but order the paths differently within the queue, relying either on a lexicographical ordering, ordered aggregated sums, or a simple FIFO/LIFO ordering~\cite{Martins_1984, Martins_Santos_1999,Brumbaugh-Smith_Shier_1989}. Most notably, A* Prune~\cite{Liu_Ramakrishnan_2001} is a multi-metric adaptation\footnote{This adaptation is exact, \ie not a heuristic, as the estimated cost underestimates the actual distance towards the destination.} of A* relying on a PQ where paths known to be unfeasible are pruned. 
Two-phase methods~\cite{Raith_Ehrgott_2009} first find paths
lying on the convex hull of the Pareto front through multiple Dijkstra runs,
before finding the remaining non-dominated path through implicit enumerations.

Finally, \samcra~\cite{SAMCRA} is a popular and well-known multi-constrained path algorithm.
Similarly to other Dijkstra-based algorithms, \samcra relies on a PQ
to explore the graph but instead of the traditional lexicographical ordering, it relies on non-linear cost aggregation.
Among feasible paths (others are natively ignored) it first considers the one that minimizes its maximum distance to the multiple constraints. Such a path ranking to deal with the PQ is supposed to increase its performance with respect to other PQ organizations.

\begin{table*}[!ht]
    \centering
    \caption{Qualitative summary of a representative subset of DCLC-compatible algorithms showcasing their practicality, exactitude, and performance. In the \textit{Practical Features} column, the green check-mark indicates whether the algorithm supports the corresponding feature (while the red cross denotes the opposite). In the \textit{Exactitude vs Performance} column, the two subcolumns associated which each three scenarios show how the latter impact $(i)$ the exactitude (exact, strong guarantees, no guarantees) and $(ii)$ the performance of the algorithm (polynomial time or not). While the orange tilde denotes strong guarantees in terms of exactitude, green check-marks (and red crosses respectively) either indicate exact results (no guarantees resp.) or polynomial-time execution (exponential at worst resp.) for performance. For both subcolumns \textit{Bounded Pareto Front} and \textit{Coarse Metric}, we consider the case where their spreading is polynomial with respect to the number of vertices in the input graph (and as such predictable in the design/calibration of the algorithm).
    }
    \label{tab:relatedtable}
    \begin{tabular}{cccccccccc}
    \toprule
              Algorithms   & \multicolumn{3}{c}{Practical Features} & \multicolumn{6}{c}{Exactitude vs Performance}   \\
              \cmidrule(r){2-4}\cmidrule(l){5-10}
                  & Multi-Dest & SR   & Multi-thread & \multicolumn{2}{c}{Bounded} & \multicolumn{2}{c}{Coarse}  & \multicolumn{2}{c}{All} \\
                  & Single Run & Ready & Ready & \multicolumn{2}{c}{Pareto Front} & \multicolumn{2}{c}{Metric} & \multicolumn{2}{c}{Cases}  \\\midrule
    LARAC~\cite{larac} &  \darkredcross     & \darkredcross     & \darkredcross               & \darkredcross & \darkgreencheck & \darkredcross & \darkgreencheck & \darkredcross      & \darkgreencheck     \\
    LPH~\cite{EBFA01}   &  \darkgreencheck     & \darkredcross     & \darkredcross               & \darkgreencheck & \darkgreencheck & \orangetilde & \darkgreencheck & \darkredcross      & \darkgreencheck     \\
    HMCOP~\cite{Korkmaz_Krunz_2001}   &  \darkredcross     & \darkredcross     & \darkredcross  & \darkredcross & \darkgreencheck & \darkredcross & \darkgreencheck & \darkredcross      & \darkgreencheck     \\ \midrule
    HIPH ~\cite{HIPH}  &  \darkgreencheck    & \darkredcross     & \darkredcross               & \darkgreencheck & \darkgreencheck & \darkgreencheck      & \darkgreencheck & \orangetilde      & \darkgreencheck     \\
    Hassin~\cite{Hassin_1992} &  \darkredcross    & \darkredcross     & \darkredcross               & \orangetilde      & \darkgreencheck & \darkgreencheck     & \darkgreencheck & \orangetilde           & \darkgreencheck     \\
    Tsaggouris et al.~\cite{Tsaggouris_Zaroliagis_2009} & \darkgreencheck & \darkredcross & \darkredcross & \orangetilde & \darkgreencheck & \darkgreencheck & \darkgreencheck & \orangetilde & \darkgreencheck \\\midrule
    Raith et al.~\cite{Raith_Ehrgott_2009}  & \darkredcross & \darkredcross     & \darkredcross               & \darkgreencheck & \darkgreencheck      & \darkgreencheck & \darkgreencheck     & \darkgreencheck      & \darkredcross     \\
    A* Prune.~\cite{Liu_Ramakrishnan_2001}  & \darkredcross & \darkredcross     & \darkredcross               & \darkgreencheck & \darkgreencheck      & \darkgreencheck & \darkgreencheck     & \darkgreencheck      & \darkredcross     \\
    SAMCRA~\cite{SAMCRA} &  \darkgreencheck     & \darkredcross     & \darkredcross               & \darkgreencheck & \darkgreencheck      & \darkgreencheck & \darkgreencheck      & \darkgreencheck      & \darkredcross     \\ \midrule\midrule
    BEST2COP  & \darkgreencheck  & \darkgreencheck     & \darkgreencheck               & \darkgreencheck & \darkgreencheck  & \darkgreencheck & \darkgreencheck  & \orangetilde    & \darkgreencheck    \\ \bottomrule
    \end{tabular}
    \end{table*}


As we have seen so far, while many solutions exist, most possess certain drawbacks or lack certain features to reconcile both the practice and the theory.
Heuristics do not always allow to retrieve the existing paths enforcing strict SLAs, while exact solutions
are not able to guarantee a reasonable maximum running time when difficult instances arise, although both features are essential for real-life deployment.
On the other hand, FPTAS can provide both strong guarantees and a polynomial execution time.
However, they are often found in the field of operational research where, at best, possible networking applications and assumptions are discussed, but are not investigated.
Because of this, the deployment of the computed paths, with SR and its MSD in particular, is not taken into consideration. It is worth to note that the number of segments
is not a standard metric as it is not simply a weight assigned to each edge in the original graph (that is, without a specific construct, it requires to be computed on the fly for each visited path). Considering the latter can have a drastic impact on the performance of the algorithms not designed with this additional metric in mind.
In addition, not all the algorithms presented here and in Table~\ref{tab:relatedtable} are single-source multiple-destinations. Finally, none of these algorithms
evoke the possibility to leverage multi-threaded architectures, an increasingly important feature as such computations now tend to be performed by dedicated Path Computation Elements or even in the cloud.

Our contribution, \bestcop, aims to close this gap by mixing the best existing features (such as providing both a limited execution time and strong guarantees in terms of precision in any cases) and adapt them for a practical modern usage in IP networks deploying SR.
Table~\ref{tab:relatedtable} summarizes some key features of a representative subset
of the related work. Similarly to FPTAS, \bestcop rounds one of the metrics of the graph. However, conversely to most algorithms,
\bestcop does not sacrifice accuracy of the cost metric, but of the measured delay. Because of the native inaccuracy of delay measurements (and the
arbitrary nature of its constraint), this does not prevent \bestcop from being technically exact in most practical cases.
In addition (and akin to ~\cite{HIPH}), \bestcop can easily be tuned to remain exact on all simple instances with a bounded Pareto front regardless of the accuracy of the metrics. Thus, \bestcop can claim to return exact solutions in most scenarios and, at worst, ensure strict guarantees in others (for theoretical exponential instances). In all cases, \bestcop possesses a pseudo-polynomial worst-case time complexity. \bestcop
was designed while keeping the path deployment aspect of the problem in mind. A single run allows to find all DCLC paths (and many variants as we will see later) to all destinations. The MSD constraint related to SR is taken into account
natively. As a result, paths requiring more than MSD segments are excluded from the exploration space. The outer loop of \bestcop can be easily parallelized, leading to a non-negligible reduction in the execution time.
In Sec.~\ref{sec:evaluation}, relying on a performance comparison between \bestcop and \samcra, we will show that \bestcop does not even need to rely on multi-threading to provide lower computing times than \samcra while returning the exact same solutions (even though we make \samcra benefit from advantageous methods explained thoroughly in the remainder of this paper).
This result is particularly interesting as it remains true even for simple IP network instances. This comparison also enables to evaluate Dijkstra-oriented solutions (\samcra) with respect to Bellman-Ford-oriented ones (\bestcop).


Last but not least, \bestcop has been adapted for multi-area networks and leverages the structures of the latter, allowing it to solve DCLC on very large ($\approx$ \num{105000} nodes) domains in one second. To the best of our knowledge, such large-scale experiments and results have neither been conducted nor achieved within SR domains~\footnote{Some elementary algorithms (such as Multi-constrained Dijkstra) and more intricate solutions exhibit impressive computing times on even more massive road networks~\cite{Hanusse_Ilcinkas_Lentz_2020}. However, such networks are less dense than IP ones (and with metrics that are also even more correlated).}.

\subsection{SR is Relevant for DCLC: MSD is not a Limit}~\label{ssec:nbseg}

Given the MSD constraint, one may question the choice of SR for deploying DCLC paths in practice.
Indeed, in some cases, in particular if the metrics are not aligned\footnote{The delay and the IGP costs in particular. Since node segments represent best IGP paths, the IGP cost and the number of segments will most likely be aligned by design}, constrained paths
may required more than MSD detours to satisfy a
stringent latency constraint.

\begin{figure}
    \centering
    \includegraphics[width=1\textwidth]{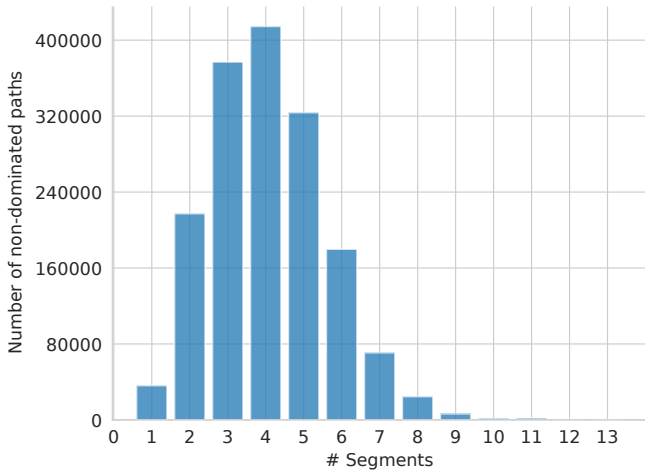}
    \caption{Required number of segments for all DCLC solutions, in a network of \num{45000} nodes generated by YARGG, with delay constraints of up to \num{100}ms.}
    \label{fig:nbseg}
\end{figure}

While it has been shown that few segments are required for most current SR usages (e.g. for TI-LFA \footnote{Topology Independent Loop Free Alternate} or when considering only one metric)~\cite{TheseClarenceFils,AubryPhd}, to the best of our knowledge, there is no similar study for our specific use-case, i.e. massive scale networks with two valuation functions (delay and IGP cost).
This is probably one of the most exciting challenge for SR as DCLC is a complex application.
However, since such massive-scale computer network topologies are not available publicly, we rely on our own topology generator whose detailed description is available in section \ref{sec:generator}.
These topologies follow a standard OSPF-like area division.
and both metrics (delay and IGP cost) follow a realistic pattern. For this analysis, we opt for a worst-case graph having $\approx~$\num{45000} nodes and $\approx~$\num{92000} edges scattered in \num{140} areas.

For this analysis, we keep track, for each destination, of all the solutions solving DCLC for all delay constraints up to 100ms, and extract the necessary number of segments.
In other words, we show the number of segments required to encode all non-dominated (and thus practically relevant for some given constraint) paths, considering all delay constraints up to 100ms.
The results are shown in Fig.~\ref{fig:nbseg}.



\red{One can see that most paths require less than 10 segments,
meaning that performant hardware should be able to
deploy most DCLC paths. However, some corner-cases requiring more than 10 segments do exist, probably arising
from stringent delay constraints. In addition, less performant hardware (\eg with MSD $\approx 5$), while able to deploy the majority of DCLC paths, can not deploy \textit{any} DCLC path. Note that several mechanisms exist to bypass this limit. Flexible Algorithms~\cite{I-D.ietf-lsr-flex-algo} allows to compute shortest paths and create segments with other metrics (\eg the delay). Binding segments~\cite{RFC8402} allows to "compress" a segment list in a single segment, which is uncompressed when popped from the list. However, both techniques increases the message exchange, number of states to maintain, and overall complexity. Their usage should thus be limited to a few corner cases.}

\red{Consequently, our analysis exhibits two main points. First, SR is appropriate to deploy TE paths. Indeed, the majority of DCLC paths should be deployable within the MSD constraint, if not all when using performant hardware. Second, since there may however exist DCLC paths requiring more than MSD segments, this limit \text{must} be considered to compute feasible paths correctly. Otherwise, a single non-feasible path dominating feasible ones is enough to lead to an incorrect algorithm. The underlying path computation algorithm must then efficiently consider delay, IGP cost and the number of segments to ensure its correctness.}

\section{BEST2COP(E): Efficient Data Structures and Algorithms for Massive Scale Networks}\label{sec:best2cop}

This section presents our contributions. We introduce and define preliminary notations and concepts used to design \bestcop,
before describing the data structures used by our algorithm.
In section \ref{sec:flat},  we describe our algorithm, \bestcop,
and show how we extend it for massive-scale networks divided in several areas in section~\ref{sec:areas}.

We have shown that SR seems indeed appropriate (as desired) for fine-grained delay-based TE. We thus aim to solve DCLC in the context of an ISP deploying SR, leading to the DCLC-SR problem that considers the IGP cost, the propagation delay, and the number of segments.

For readability purposes, we denote:
\begin{itemize}
  \item $M_0$ the metric referring to the number of segments, with the constraint $c_0$ = MSD applied to it;
  \item $M_1$ the delay metric, with a constraint $c_1$;
  \item $M_2$ the IGP metric being optimized.
\end{itemize}

Given a source $s$, \textbf{DCLC-SR} consists in finding, for \textit{all} destinations, a \textit{segment list} verifying
two constraints, $c_0$ and $c_1$, respectively on
the number of segments ($M_0$) and the delay ($M_1$),
while optimizing the IGP distance ($M_2$). We denote this problem DCLC-SR$(s, c_0, c_1)$.
On Fig.~\ref{fig:DCLCGeant}, we would have DCLC-SR$(\mathit{Frankfurt}, 3, 8) \supset \mathit{Frankfurt}-\mathit{Budapest}-\mathit{Vienna}$. This DCLC path (shown in yellow in Fig~\ref{fig:DCLCGeant}), is indeed the best option to reach Vienna when considering an arbitrary delay constraint of 8ms. Since the best IGP path from Frankfurt to Vienna (the green one) does not go through Budapest, encoding this DCLC path requires at least one detour, i.e. one segment (here, a node segment instructing the packet to go through Budapest first).


To solve such a challenging problem, efficient data structures are required.
In the following, we first introduce the constructs we leverage and how we benefit from the inaccuracy of real delay measurements in particular.

\subsection{DCLC and True Measured Delays}\label{sec:true}

\subsubsection{Leveraging measurement inaccuracy}
As mentioned, DCLC is weakly NP-Hard, and can be solved exactly in pseudo-polynomial time. In other words, as long as either the cost of the delay possesses only a limited number of distinct values (i.e., paths can only take a limited number of distinct distances),  the Pareto front of the paths' distances is naturally bounded in size as well, making DCLC tractable and efficiently solvable\footnote{Metric $M_0$ is omitted for now as this trivial distance is only required for SR and discussed in details later. While dealing with a three-dimensional Pareto front seems more complex at first glance, we will show that SR eventually reduces the exploration space because its operational constraint is very tight in practice and easy to handle efficiently.}.
Such a metric thus has to be bounded and possess a coarse accuracy (i.e., be discrete).
Although this has little impact when solving DCLC in a theoretical
context, it can be strongly leveraged to solve DCLC efficiently thanks to the characteristics of real ISP networks.

We argue that the metrics of real ISP networks do indeed possess a limited number of distinct values.
Although \bestcop can be adapted to fit any metric, we argue that $M_1$, the propagation delay, is the most appropriate one.
Indeed, IGP costs depend on each operators' configurations. For example, while some may rely on few spaced weights, others may possess intricate weight systems where small differences in weights may have an impact. Thus, bounding the size of the Pareto front based on the IGP costs is not only operator-dependant, but might still result in a very large front.

On the other hand, the delay (i) is likely strongly bounded,
and (ii) can be handled as if having a coarse accuracy in practice.
For TE paths, the delay constraint is likely to be very strict (10ms or less). Second, while the delay of a path is generally represented by a precise number in memory, the actual accuracy, \ie the trueness $t$ of the measured delay is much coarser due to technical challenges~\cite{RFC7679, RFC2681}.
In addition, delay constraints are usually formulated at the millisecond granularity with a tolerance margin, meaning that some loss of information is acceptable.

Thus, floating numbers representing the delays can be truncated to integers, e.g., taking 0.1ms as unit.
This allows to easily bound the number of possible non-dominated distances to $c_1 \times \gamma$, with $\gamma$ being the desired level of accuracy of $M_1$ (the inverse of the unit of the delay grain, here 0.1ms). 
For example, with $c_1 = 100$ms and a delay grain of 0.1ms ($\gamma = \frac{1}{0.1} = 10$), we have only $1000$ distinct (truncated) non-dominated pairs of distances to track at worst. This leads to a predictable and bounded Pareto front.
One can then store non-dominated distances within a static array, indexed on the $M_1$-distance (as there can only be one non-dominated couple of distances $(M_1, M_2)$ for a given $M_1$-distance).

 In the remaining of the paper, $\Gamma$ denotes the size allocated in memory for this Pareto front array (i.e., $\Gamma = c_1 \times \gamma$). When $t$, i.e., the real level of accuracy, is lower (or equal) than $\gamma$, the stored delay can be considered to be exact. More precisely, it is discretized but with no loss of relevant information.
When $t$ is too high, one can choose $\gamma$ such that $\gamma < t$,
to keep $\Gamma$ at a manageable value. In this case, some relevant information can be lost, as the discretization is too coarse. While this sacrifices the exactitude of the solution
(to the advantage of computation time), our algorithm is still able to provide predictable guarantees
in such cases (\ie a bounded error margin on the delay constraint). This is further discussed in Section~\ref{sec:b2coplimits}.

\red{\subsubsection{Fine, but which delay?}
Referring to a path's delay may be ambiguous. Indeed, this characteristic is not monolithic. The total delay is mainly composed of the propagation delay and the queuing delay. Both delays may play an important part in the overall latency, though none can be stated to be the main factor~\cite{end2end}. Although the propagation delay is stable, the queuing delay may vary depending on the traffic load. However, in order to compute TE paths, the delay metric must be advertised (usually within the IGP itself). For this reason, it is strongly recommended
to use a stable estimate of the delay, as varying delay estimations may lead to frequent re-computations, control-plane message exchanges, and fluctuating traffic distribution~\cite{RFC7471,RFC8570}.}

\red{For this reason, we use the propagation delay, as recommended in ~\cite{RFC7471,RFC8570}. The latter is usually measured through the use of a priority queue, ignoring so the queuing delay. Its value is deduced as a minimum from a sampling window, increasing so its stability~\cite{perfmes}. Using this delay not only makes our solution practical (as we rely on existing measurements and respect protocol-related constraints), but is actually pertinent in our case. In practice, flows benefitting from DCLC paths benefit from a queue with high priority and experience negligible queueing delays. In addition, the amount of traffic generated by such premium interactive flows can be controlled to remain small enough if it is not limited by design. Consequently, not only is there no competition between premium flows and best-effort traffic, but these flows do not generate enough traffic to lead to significant competition between themselves. Thus, the experienced delay is actually agnostic of the traffic load for our use-case, making the propagation delay a relevant estimate.
Consequently, we use the discretized propagation delay, enabling both practical deployment and the limitation of the number of non-dominated distances, within our structure used to encompass Segment Routing natively, the SR graph.
}

\subsection{The SR Graph and 2COP}\label{sec:sr-graph}

To solve DCLC-SR efficiently, as well as its comprehensive generalization, 2COP, we rely on a specific construct used to encompass SR, the delay, and the IGP cost: the \textit{multi-metric SR graph}.

\subsubsection{Turning the Physical Graph into a Native SR Representation}
\label{ssec:srgraph}
This construct represents the segments as edges to natively deal with the $M_0$ metric and its constraint, $c_0=\text{MSD}$.
The valuation of each edge depends on the distance of the path encoded by each segment.
While the weights of an adjacency segment are the weights of its associated local link, the weights of a node segment are the distances of the ECMP paths it encodes: the \red{(equal) IGP cost (\ie, M2-distance)}, and the lowest guaranteed delay (\ie, the M1-distance), \ie the worst delay among all ECMP paths.
Hence, computing paths on the SR graph is equivalent to combining stacks of segments (and the physical paths they encode), as stacks requiring $x$ segments are represented as paths of $x$ edges in the SR graph (agnostically to its actual length in the raw graph).
The SR graph can be built for all sources and destinations thanks to an All Pair Shortest Path (APSP) algorithm.
Note that this transformation is inherent to SR and leads to a complexity of $O(n (n\log(n) + m))$, for a raw graph having $n$ nodes and $m$ edges, with the best-known algorithms and data structures. This transformation (or rather, the underlying APSP computation) being required for any network deploying TE with SR (the complexity added by our multi-metric extension being
negligible), we do not consider it as part of our algorithm presented later.

This transformation is shown in Fig.~\ref{fig:examplesr}, which shows the SR counterpart of the raw graph provided in Fig.~\ref{fig:DCLCGeant}.
To describe this transformation more formally, let us denote $G = (V, E)$ the original graph, where $V$ and $E$ respectively refer to the set of vertices and edges. As $G$ can have multiple parallel links between a
pair of nodes $(u,v)$, we use $E(u,v)$ to denote all the direct links between
nodes $u$ and $v$.
Each link $(u,v)$ possesses two weights, its delay $w_1^{G}((u,v))$ and its IGP cost $w_2^{G}((u,v))$.
The delay and the IGP cost being additive metrics, the $M_1$ and $M_2$ distances of a path $p$ (denoted $d_1^G(p)$ and $d_2^G(p)$ respectively) are the sums of the weights of its edges.

From $G$, we create a transformed multigraph, the SR graph denoted $G' = (V,
E')$. While the set of nodes in $G'$ is the same as in $G$, the set of edges
differs because $E'$ encodes segments as edges representing either adjacency or node segments encoding respectively local physical link or sets of best IGP paths (with ECMP).
The $M_i$-weight of an edge in $G'$ is denoted $w_i^{G'}((u,v))$.
However, to alleviate further notations, we denote simply $d_i(p)$ the $M_i$ distance of a path in $G'$ instead of $d_i^{G'}(p)$.
Note that if $G$ is connected, then $G'$ is a complete graph thanks to node segments.

\red{\textbf{SR graph: Node segment encoding.}} A node segment, encoding the whole set $P_G(u,v)$ of ECMP best paths
between two nodes $u$ and $v$, is represented by exactly one edge in $E'(u,v)$. The $M_2$-weight $w_2^{G'}((u,v))$ of a node segment is the (equal) $M_2$-distance of $P_G(u,v)$.
Since, when using a node segment, packets may follow any of the ECMP paths,
we can only guarantee that the delay of the path will not exceed the
maximal delay out of all ECMP paths. Consequently, its $M_1$-weight $w_1^{G'}((u,v))$ is defined as the maximum $M_1$-distance among all the paths in $P_G(u,v)$.
Links representing node segments in $G'$ thus verify the following:
\[
\begin{array}{l}
    w_1^{G'}((u,v)) = \max_{p\in P_G(u,v)} d_1^G(p)\\
    w_2^{G'}((u,v)) = d_2^G(p)\quad \text{for any }p\in P_G(u,v)
\end{array}
\]

\red{\textbf{SR graph: Adjacency segment encoding.}} An adjacency segment corresponds to a link in the graph $G$ and is represented by an edge $(u_x,v)$ in $E'(u,v)$,
whose weights are the ones of its corresponding link in $G$, only if it is not dominated by the node segment $(u,v)_{G'}$ for the same pair of nodes, \ie if
$w_1^{G'}((u,v)) > w_1^{G}((u_x,v))$, or by any other non-dominated adjacency
segments $(u_y,v)$, \ie if $w_1^G((u_y,v)) > w_1^{G}((u_x,v))$ or
$w_2^G((u_y,v)) > w_2^{G}((u_x,v))$, where $(u_x, v)$ and $(u_y, v)$ are two different outgoing links of $u$ in $E(u,v)$\footnote{If two links have exactly the same weights, we only add one adjacency segment in $G'$}.\\

Fig.~\ref{fig:examplesr} illustrates the result of such a transformation: one can easily identify the three non-dominated paths between Frankfurt and Vienna, bearing the same colors as in Fig.~\ref{fig:DCLCGeant}.
The green path (\ie the best $M_2$ path) is encoded by a single node segment. The pink, direct path (\ie the best $M_1$ path) is encoded by an adjacency segment (the double line in Fig.~\ref{fig:examplesr}).
The yellow paths (the solution of DCLC-SR(Frankfurt, 3, 8) and an interesting tradeoff between $M_1$ and $M_2$) requires
an additional segment, in order to be routed through Budapest.
Note that in practice, the last segment is unnecessary if it is a node segment, as the packet will be routed towards its final IP destination through the best $M_2$ paths natively.

\begin{figure}
  \centering
  \includegraphics[scale=0.5]{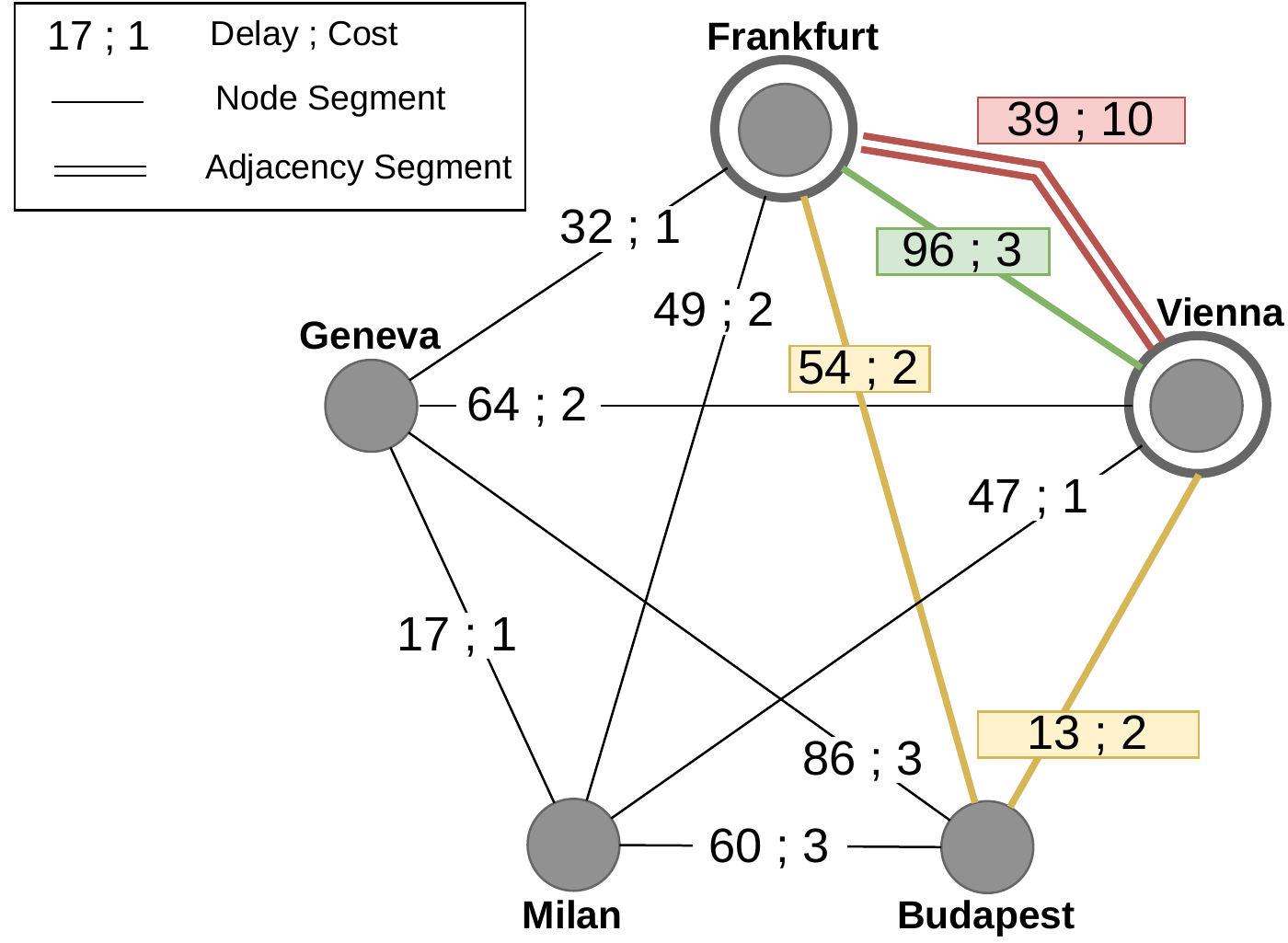}
  \caption{
  This Figure shows the network from Fig.~\ref{fig:DCLCGeant} translated into an SR graph.
  The SR graph encodes segments as edges. Plain edges represent node segments, \ie sets of ECMP paths. Double-lines are adjacency segments, here only $(\mathit{Frankfurt},\mathit{Vienna})$, and are visible only if they are not dominated by other segments. Colored edges refer to the paths highlighted in Fig.~\ref{fig:DCLCGeant}.}
  \label{fig:examplesr}
\end{figure}

\red{Our multi-metric SR graph (or equivalent constructs gathering the multi-metric all-pair shortest path data) is mandatory to easily consider the number of segments necessary to encode the paths being explored. However, its usage can differ in practice. We envision two modes which allow to consider this additional "off the graph" metric, using our SR Graph.}\\

\red{\textbf{Using the SR Graph to perform path conversions.} One of the two options is to run the path computation algorithm on the original topology, and convert the paths being explored to segment lists. Performing this conversion is however not trivial. One must return the minimal encoding of the given path (with respect to the number of segments) while correctly managing the (forced) path diversity brought by ECMP, which may exhibit heterogeneous delays. However, one can efficiently perform such conversion when relying on our SR graph. By
summarizing the relevant information (\ie the worst-case delay within ECMP paths), the SR Graph allows to easily consider the ECMP nature of SR within a multi-metric context. 
However, the segment metric $M_0$ is peculiar. Extending a path does not always imply an increase in the number of necessary segments. Furthermore, the number of segments required to encode two distinct paths may evolve differently, even when the latter are extended from the same node with the same edge. Because of these properties, the way to check paths for dominancy must be revised. This extended dominance check may lead to an increased number of paths to extend, and thus to a higher worst-case complexity. Further details regarding the conversion algorithm and the extended dominancy check can be found in ~\ref{app:LCA}.}\\

\red{\textbf{Using the SR Graph natively.} Another method is to run the path computation algorithm directly on the SR graph we described. 
Note that this forces the algorithm to run on a complete graph, which may significantly increase the overall complexity. However, the segment metric $M_0$, originally an "off the graph" metric with singular properties, becomes a standard graph metric, as it is now expressed by the number of edges that compose the paths (a path encoded by $x$ segments has $x$ edges within the SR Graph). This method also allows using standard, known algorithms as-is to solve the DCLC-SR problem.\\}

\red{When designing our algorithm, \bestcop, we use the second approach. Indeed, by using a bellman-ford-like exploration of the SR Graph, one can not only easily prune paths requiring more than $MSD$ segments, but also benefit from efficient Pareto front management and multi-threading. These various features allow \bestcop to efficiently solve not only DCLC-SR, but also 2COP, a more general and practically relevant problem regarding the computation of constrained paths within an SR domain. Note, however, that we will provide our competitor with both approaches to make the evaluation as fair as possible.}



\subsubsection{The 2COP Problem(s)}
\label{sec:2cop}
Solving DCLC-SR exactly requires, by definition, to visit the entirety of the Pareto front for all destinations.
However, although only some of these paths are DCLC-SR solutions for a given delay constraint, all paths visited during this exploration may be of some practical interest. In particular, some of them solve problems similar to DCLC but with different optimization strategies and constraints. By simply memorizing the explored paths (\ie storing the whole Pareto front within an efficient structure), one can solve a collection of practically relevant problems. For instance, one may want to obtain a segment path that minimizes the delay, another the IGP-cost, or the number of segments. Solving 2COP consists in finding, for all destinations, paths optimizing all three metrics independently, and respecting the given constraints.
We formalize this collection of problems as 2COP.
Solving 2COP enables more versatility in terms of optimization strategies and handles heterogeneous constraints for different destinations.
Simply put, while DCLC-SR is a one-to-many DCLC variant taking MSD into account, 2COP is more general as it includes all optimization variants.


With initial constraints $c_0, c_1, c_2$, \bestcop solves 2COP, \ie returns in a single run paths that satisfy smaller constraints $c_0', c_1', c_2'$ for any $c_i' < c_i$, $i = 0,1,2$, \red{offering more flexibility than simply returning the DCLC-SR solution}.

\textbf{Definition.}
  \textit{\textbf{2-Constrained Optimal Paths (2COP)}\\
  Let $f(M_j, c_0, c_1, c_2, s, d)$ be a function that returns a feasible segment path from s to d (if it exists),
  verifying all constraints $c_i, 0 \leq i \leq 2$ and optimizing $M_j,j \in {0,1,2}$. For a given source $s$ and given upper constraints $c_0, c_1, c_2$, we have}
   \begin{gather*} 2COP(s, c_0, c_1, c_2) =
   \bigcup_{\substack{
    \forall d \in V,\\
    \forall j \in \{0,1,2\},\\
    \forall c_j' \leq c_j
          }}
   f(M_j, c_0', c_1', c_2', s, d)
   \end{gather*}

Observe that, for any $s \in V$, DCLC-SR($s, c_0, c_1$) consists of the paths in $2COP(s, c_0, c_1, \infty)$ minimizing $M_2$.
Looking at Fig.~\ref{fig:examplesr}, we have two interesting examples (we rely on the first capital letter of the cities):
 \begin{gather*}
  f(M_2,3,70,\infty,F,V) = \makesrpath{{(F,B)}, {(B,V)}}~(67,4)\\
  f(M_1,3,\Gamma,\infty,G,B) = \makesrpath{{(G,M)}, {(M,B)}}~(77,4)
 \end{gather*}
In the second example, recall that the M1-distances are truncated to obtain integer values and $\Gamma$ is the maximum $c_1$ constraint we consider (multiplied by $\gamma$).

When the delay accuracy allows to reduce the problem's complexity sufficiently, \bestcop can solve exactly any of the variants within 2COP \red{and return any desired output of the image of $f$}.


In Sec.~\ref{sec:b2coplimits}, we detail how we can handle each 2COP variant with guarantees when the delay accuracy is too high to provide exact solutions while remaining efficient.
Solving 2COP can be implemented as efficiently as solving only DCLC-SR.

\subsection{Our Core Algorithm for Flat Networks}
\label{sec:flat}

In this section, we describe \bestcop, our algorithm efficiently solving 2COP (and so DCLC-SR).
Its implementation is available online\footnote{\url{https://github.com/talfroy/BEST2COP}}.
\red{Although \bestcop was already described in our previous contribution, a complete and far more detailed algorithmic description can be found in ~\ref{app:b2cop} for the interested reader.}

Akin to the SR graph computation, \bestcop can be run on a centralized controller but also by each router.
Its design is centered around two properties.
First, the graph exploration is performed so that paths requiring $i$ node segments are found at the $i^{\mathit{th}}+1$ iteration\footnote{Note that each adjacency segment translates to at least one necessary segment, two if they are not globally advertised and not subsequent.}, to natively tackle the MSD constraint.
Second, \bestcop's structure is easily parallelizable, allowing to benefit from multi-core architectures with low overhead.

\red{Simply put, at each iteration, \bestcop starts by extending the known paths by one segment (one edge in the SR graph) in a Bellman-Ford fashion (a not-in-place version to be accurate). Paths found during a given iteration are only checked loosely (and efficiently) for dominancy at first. This extension is performed in a parallel-friendly fashion that prevents data-races, allowing to easily parallelize our algorithm. \red{Only once} at the end of an iteration are the newly found paths filtered and thoroughly checked for dominancy, to reflect the new Pareto front. The remaining non-dominated paths are in turn extended at the next iteration. 
These steps only need to be performed $MSD \approx 10$ times, ignoring so all paths that are not deployable through SR.}
When our algorithm terminates, the results structure contains, for each segment number, all the distances of non dominated paths from the source towards all destinations.
\red{The interested reader may find further details regarding how 2COP solutions are extracted from the results structure in  Appendix~\ref{sec:best2cop}.}


\red{The good performance of \bestcop comes from several aspects. First, the fact that paths requiring more than MSD segments are natively excluded from the exploration space. Second, well-chosen data structures benefiting from the
limited accuracy of the delay measurements to limit the number of paths to extend. This allows to
manipulate arrays of fixed size, because the Pareto front of distances towards each node is limited to $\Gamma$ at each step (enabling very efficient read/write operations). Third, using a Bellman-Ford approach allows not only to easily parallelize our algorithm but also to perform lazy efficient update of the Pareto front. Indeed, a newly found path may only be extended at the \textit{next} iteration. Thus, we can efficiently extract the non-dominated paths from all paths discovered during the current iteration in a \text{single} pass, once at the end of the iteration. Conversely, other algorithms
tend to either check for dominancy whenever a path is discovered
(as the later may be re-extended immediately), or not bother to check for dominancy at all, \eg by relying solely on interval partitioning to limit the number of paths to extend.}



\subsection{For Massive Scale, Multi-Area Networks}
\label{sec:areas}

As shown in~\cite{nca2020}, this algorithms exhibits great performance on large-scale networks of up to \num{1000} nodes ($\approx 15$ms).
However, since the design of \bestcop implies a dominant factor of $|V|^2$ in term of time complexity\footnote{The detailed complexity is given in section \ref{sec:complexitybis}} (the SR graph being complete), recent SR deployments with more than $\num{10000}$ nodes would not scale well enough.
The sheer scale of such networks, coupled with the inherent complexity of TE-related problems, makes
2COP very challenging if not impossible to practically compute at first glance.
In fact, even \bestcop originally exceeds 20s when dealing with $\approx$\num{15000} nodes.
As we will see in the evaluations, this is much worse with concurrent options.

In this section, we extend \bestcop in order to deal efficiently with massive scale networks.
By leveraging the physical and logical partitioning usually performed in such networks, we manage to solve 2COP in $\approx 1$s even in networks of \num{100000} nodes.

\subsubsection{Scalabity in Massive Network \& Area decomposition}
The scalability issues in large-scale networks do not arise solely
when dealing with TE-related problems. Standard intra-domain routing protocols encounter issues past several thousands of nodes. Naive network design creates a large, unique failure domain resulting in numerous computations and message exchanges, as well as tedious management.
Consequently, networks are usually divided, both logically and physically, in \textit{areas}.
This notion exists in both major intra-domain routing protocols (OSPF and IS-IS). In the following, we consider the standard OSPF architecture and terminology but our solution can be adapted to fit any one of them

Areas can be seen as small, independent sub-networks (usually of around 100 - 1000 nodes at most). Within OSPF,
routers within an area maintain a comprehensive topological database of their own area only. Stub-areas are centered around the \textit{backbone}, or area 0. \textit{Area Border Routers}, or ABRs, possess an interface in both the backbone area and a stub area. Being at the intersection of two areas, they are in charge of sending a summary of the topological database (the best distance to each node) of one area to the other. There are usually at least two ABRs between two areas. We here (and in the evaluation) consider two ABRs, but the computations performed can be easily extended to manage more ABRs.
Summaries of a non-backbone area are sent through the backbone. Upon reception, ABRs inject the summary within their own area. In the end, all routers possess a detailed topological database of their own area and the best distances towards destinations outside of their own area.


\subsubsection{Leveraging Area Decomposition}

This partitioning creates obvious separators within the graph, the ABRs. Thanks to the latter, we can leverage this native partition in a \red{similar} divide-and-conquer approach, adapted to the computation of 2COP paths, by running \bestcop at the scale of the areas before exchanging and combining the results.
We do not only aim to reduce computation time, but also to keep the number and size of the exchanged messages manageable.



We now explain how we perform this computation in detail.
For readability purposes, we rely on the following notations:
$\mathcal{A}_x$ denotes area $x$.
$A_x$ denotes the ABR between the backbone and $\mathcal{A}_x$.
When necessary, we may distinguish the two ABR $A1_x$ and $A2_x$. Finally, b2cop($\mathcal{A}_x, s, d$) denotes the results (the non-dominated paths) from $s$ to $d$ within $\mathcal{A}_x$.
When $d$ is omitted, we consider all routers within $\mathcal{A}_x$ as destination. Figure~\ref{fig:cartesian product} illustrates a network with three areas, $x$, $y$ and 0, the backbone area.\\
\red{We here chose to detail a simple distributed and router-centric variant of our solution.
However, our solution may well be deployed in other ways, e.g. relying on controllers, or even a single one.
In such cases, the computation could be parallelized per area if needed. Such discussion is left for future work.\\}


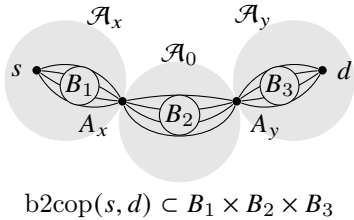
\begin{figure}\centering
  \begin{tikzpicture}[
    router/.style={black,fill, circle, inner sep=0.04cm}
  ]
    \definecolor{airecolor}{rgb}{0.9,0.9,0.9}
    \fill[airecolor] (0,0) circle (0.8);
    \fill[airecolor] (180-20:1.6) circle (0.8);
    \fill[airecolor] (20:1.6) circle (0.8);
    \node[] (AireX) at (180-55:1.7) {$\mathcal{A}_x$};
    \node[] (AireY) at (55:1.7) {$\mathcal{A}_y$};
    \node[] (Aire0) at (90:.9) {$\mathcal{A}_0$};

    \node[router,label={180:$s$}] (s) at (180-20:2) {};
    \node[router,label={-135:$A_x$}] (A1) at (180-20:0.8) {};
    \node[router,label={-45:$A_y$}] (A2) at (20:0.8) {};
    \node[router,label={0:$d$}] (d) at (20:2) {};

    \draw (s) edge[bend left=15] (A1);
    \draw (s) edge[bend left=40] (A1);
    \draw (s) edge[bend right=15] (A1);
    \draw (s) edge[bend right=40] (A1);

    \draw (A1) edge[bend left=10] (A2);
    \draw (A1) edge[bend right=70] (A2);
    \draw (A1) edge[bend right=15] (A2);
    \draw (A1) edge[bend right=40] (A2);

    \draw (A2) edge[bend left=15] (d);
    \draw (A2) edge[bend left=40] (d);
    \draw (A2) edge[bend right=15] (d);
    \draw (A2) edge[bend right=40] (d);

    \node[black, circle, draw, fill=airecolor, inner sep=0.25] (B1) at (180-20:1.4) {$B_1$};
    \node[black, circle, draw, fill=airecolor, inner sep=0.45] (B1) at (90:.09) {$B_2$};
    \node[black, circle, draw, fill=airecolor, inner sep=0.25] (B1) at (20:1.4) {$B_3$};

    \node[] (Aire0) at (-90:1.1) {b2cop$(s,d) \subset B_1\times B_2 \times B_3$};
  \end{tikzpicture}
  \caption{The set of solutions across areas is obtained from the cartesian product of the solutions in each area.}\label{fig:cartesian product}
\end{figure}

\textbf{Working at area scale.}
\red{Due to the area decomposition, routers do not possess the topological information to compute a full, complete SR graph of the whole network. Thus, we make routers only compute the SR graph of their own area(s).}
\red{Because exchanging the SR graphs themselves implies a large volume of information to share, we instead make the ABRs exchange their 2COP paths (i.e., the non dominated paths to all destinations of their areas)}  
since we limit their numbers to $\Gamma$ at worst. This exchange still provides enough information for all routers to compute all 2COP paths for every destination.




More formally, each ABR $A_x$ computes b2cop($\mathcal{A}_x, A_x$) and exchange the results with $A_y, \forall y \neq x$.
Areas being limited to a few hundreds routers on average, this computation is very efficient.
Note that ABRs also compute b2cop($\mathcal{A}_0, A_x$), but need not exchange it, as all ABRs perform this computation.
Exchanging the computed 3D Pareto front has a message complexity of $|V| \times c_0 \times \Gamma$ at worst in theory.
In practice, we expect both the size of Pareto fronts and the number of relevant destinations to consider to be fairly low ($<< \Gamma$ and $<< |V|$ resp.).
In the case of non-scalable Pareto fronts, one can opt for sending only part of them but at the cost of relaxing the guarantees brought by \bestcop.

After exchanging messages, any ABR $A_x$ should know the
non-dominated paths from itself to $A_y, \forall y \neq x$, and the non-dominated paths from $A_y$ to all nodes within $\mathcal{A}_y$. By combining this information, we can compute the non-dominated paths from $A_x$ to
all nodes within  $\mathcal{A}_y$, as we will now detail.\\



\textbf{Cartesian product.} Since ABRs act as separators within the graph, to reach a node within a given area $\mathcal{A}_y$, it is necessary to go through one of the corresponding ABRs $A_y$.
It thus implies that non-dominated paths to nodes within $\mathcal{A}_y$ from
$A_x$ can be found by combining bcop($\mathcal{A}_0, A_x, A_y$) with bcop($\mathcal{A}_y, A_y$).
In other words, by combining, with a simple cartesian product, the local non-dominated paths towards the ABRs of a given zone with the non-dominated paths from said ABRs to nodes within the corresponding distant areas, one obtains a superset of the non-dominated paths towards the destinations of the distant area. In practice, since several ABR can co-exist,
it is necessary to handle the respective non-dominated paths (bcop($\mathcal{A}_y, A1_y$) and bcop($\mathcal{A}_y, A2_y$)) with careful comparisons to avoid incorrect combinations.\\

\textbf{Post-processing and merging.} To ensure that the results obtained through the cartesian product aforementioned are correct, some post-processing is required.
When combining segment lists, the latter are simply concatenated.
More precisely, the resulting segment list necessarily possesses the following structure:
\makesrpath{{(u_0,u_1)}, \dots, {(u_i, A)}, {(A, v_0)}, \dots, {(v_{j-1}, v_j)}}, with $A$ denoting an ABR.
However, $A$ being a separator, it is likely that the best IGP path from $u_i$ to $v_0$ natively goes through $A$ without the need of an intermediary segment.
Thus, segments of the form \makesrpath{{(u_i, A)}, {(A, v_0)},} can often be replaced by a single segment $(u_i, v_0)$.
Such anomalies should be corrected, as an additional useless segment may render the path falsely unfeasible, even though it actually fits the MSD constraint.
This correction can be performed easily.
Let $A1$ be the separator, if $(u_i, A1)$ and $(A1, v_0)$ are
node segments, and all best IGP paths from $u_i$ to $v_0$ go through $A1$ (or possess the same cost and delay as the best IGP ones going through $A2$), the two node segments can be replaced by a single one.

This correction is performed quickly and relies solely on information available to the router (the local SR graph and the received distances summary).
Finally, after having performed and corrected the cartesian products for all the ABRs of the area, the latter are merged in a single Pareto front.


Once performed for all areas, an ABR $A_x$ now possesses all 2COP paths to all considered destinations within the network. These can then be sent to routers within $\mathcal{A}_x$, who will need to perform similar computations to compute non-dominated paths to all routers within a different area.
Note that the 2COP paths for each destination can be sent as things progress, so that routers can process such paths progressively (and in parallel) if needed.\\

\textbf{Summary.} By running \bestcop within each area, before exchanging and combining the results, one can find all non-dominated paths to each destination within a network of \num{100000} nodes in less than 900ms. The induced message complexity is manageable in practice and can be further tuned if required. Our method can be
adapted for controller-oriented deployments.


\subsection{A Limited Complexity with Strong Guarantees}
\label{sec:complexity}

\subsubsection{An Efficient Polynomial-Time Algorithm}
\label{sec:complexitybis}

\textbf{The flat \bestcop.} In the worst-case, for a given node $v$, there are up to $\mathit{degree}(v) \times \Gamma$
paths that can be extended towards it. Observe that $\mathit{degree}(v)$ is at least $|V|$ (because $G'$ is complete) and depends on how many parallel links $v$ has with its neighbors. With $L$ being the average number of links between two nodes in $G'$, \red{on average we thus have $\mathit{degree}(v) = |V|\times L \times \Gamma$ paths to extend to a given node, at worst.} These extensions are performed for each node $v$ and up to MSD times, leading to a complexity of $$O(c_0\cdot\Gamma\cdot |V|^2 \cdot L)$$
Using up to $|V|$ threads, one can greatly decrease the associated computation time.\\


\textbf{The Cartesian Product.} Its complexity is simply the size of the 2COP solution space squared, for each destination, thus at worst $O((c_0 \cdot \Gamma)^2 \cdot |V|)$. Note that we can reach a complexity of $O(c_0^2 \cdot \Gamma^2)$, again with the use of $|V|$ threads since each product is independent. This worst case is not expected in practice as metrics are usually mostly aligned to result in Pareto fronts whose maximal size is much smaller than $c_0 \cdot \Gamma$.\\ 

\textbf{Overall, \texttt{BEST2COPE} (multi-area) exhibits a complexity of}
\[
  O\left(c_0 \cdot \Gamma \cdot \left( c_0 \cdot\Gamma + L\cdot max_{\forall i \in [1..m]}(|V_i|) \right)\right)
  \]
  with $V_i$ denoting the set of nodes in each area $i$ ($m$ being their number) and the use of $|V|$ threads and sufficient CPU resources (\red{this bound is achievable ideally because the load is perfectly balanced and bottlenecks negligible}). Note that the cartesian product dominates this worst-case analysis as long as the product $V_i \cdot L $ remains small enough. However, with realistic weighted networks, we argue that the contribution of the Cartesian product is negligible in practice, so \texttt{BEST2COPE} is very scalable for real networking cases.

\subsubsection{What are the Guarantees one can expect when the Trueness exceeds the Accuracy, \ie if $t > \gamma$ ?}
\label{sec:b2coplimits}
~\newline
\indent If propagation delays are measured with a really high trueness (\eg with a delay grain of 1 $\mu$s or less), \bestcop (and so, BEST2COPE) can either remain exact but slower, or, on the contrary, rapidly produce approximated results.
In practice, if one prefers to favor performance by choosing a fixed discretization of the propagation delay (to keep the computing time reasonable rather than returning truly exact solutions), this may result in an array not accurate enough to store all non dominated delay values, \ie two solutions might end up in the same cell of such an array even though they are truly distinguishable. Nevertheless, we can still bound the margin errors, relatively or in absolute, regarding constraints or the optimization objective of the 2COP variant one aims to solve.

In theory, note that while no exact solutions remain tractable if the trueness of measured delays is arbitrarily high (for worst-case DCLC instances), it is possible to set these error margins to extremely small values with enough CPU power. If $t < \gamma$, each iteration of our algorithm introduces an absolute error of at most $\frac{1}{\gamma}$ for the $M_1$ metric, \ie the size of one cell in our array (recall that $\gamma = \frac{\Gamma}{c_1}$ is the accuracy level and is the inverse of the delay grain of the static array used by \bestcop{}). So our algorithm may miss an optimal constrained solution $p^*_d$ (for a destination $d$) only if there exists another solution $p_d$ such that $d_1(p_d) \geq d_1(p^*_d)$ but the $M_1$ distance of both solutions associated to the same integer (that is stored in the same cell of the \vardist array) \ie only if $d_1(p_d) \leq d_1(p^*_d) + \frac{c_0}{\gamma}$. In this case, we have $d_2(p_d) \leq d_2(p^*_d)$ because otherwise, $p^*_d$ would have been stored instead of $p_d$.
From this observation, depending on the minimized metric, \bestcop{} ensures the following guarantees.

If one aims to minimize $M_0$ or $M_2$ (\eg when solving DCLC), then \bestcop guarantees a solution $p_d$ that optimizes the given metric, but this solution might not satisfy the given delay constraint $c \leq c_1$. As an example, for DCLC-SR (optimizing $M_2$), we have

$$d_0(p_d) \leq c_0$$
$$d_1(p_d) < c + \frac{c_0}{\gamma}$$
$$d_2(p_d) \leq d_2(p^*_d)$$
With $p^*_d$ denoting the optimal constrained solution.
When minimizing $M_1$, the solution returned by \bestcop for a given destination $d$, $p_d$, will indeed verify the constraints on $M_0$ and $M_2$, and we have $d_1(p_d) < d_1(p^*_d) + \frac{c_0}{\gamma}$.
The induced absolute error of $c_0/\gamma$ regarding the delay of paths becomes negligible as the delay constraint increases.
If $c \approx c_1$, the latter translates to a small relative error of $c_0/\Gamma$. Conversely,
it becomes significant if $c << c_1$. When minimizing $M_0$ or $M_2$, it is thus recommended to set $c_1$ as low as possible regarding the relevant sub-constraint(s) $c \leq c_1$ if necessary. Similarly, to guarantee a limited relative error when minimizing $M_1$, it is worth running our algorithm with a small $c_1$ as we can have $d_1(p^*_d) << c_1$.
However, note that this later and specific objective (in practice less interesting than DCLC in particular) requires some a priori knowledge, either considering the best delay path without any $c_2$ and $c_0$ constraints, or running twice \bestcop to get $d_1(p_d)$ as a first approximation to avoid set up $c_1$ blindly initially (here $c_1$ is not a real constraint, only $c_2$ and $c_0$ apply as bounds of the problem, $c_1$ just represents the absolute size of our array and, as such, the accuracy one can achieve).



Even though \bestcop exhibits strong and tunable guarantees, it may not return exact solutions once two paths end up in the same delay cell, which may happen even with simple instances exhibiting a limited Pareto front.
Fortunately, a slight tweak in the implementation is sufficient to ensure exact solutions for such instances.
Keeping the original accuracy of M$_1$ distances, one can rely on truncated delays only to find the cell of each distance.
Then, one possible option consists of storing up to $k$ distinct distances in each cell\footnote{In practice, note that several implementation variants are possible whose one consists of using the array only when the stored Pareto front exceeds a certain threshold. Moreover, $k$ can be set up at a global scale shared for all cells or even all destinations, instead of a static value per cell, to support heterogeneous cases more dynamically. These approaches were also evoked in ~\cite{HIPH}}.
Thus, some cells would form a miniature, undiscretized Pareto front of size $k$ when required.
This trivial modification allows the complexity to remain bounded and predictable: as long as there exists less than $k$ distances within a cell, the returned solution is exact.
Otherwise, the algorithm still enforces the aforementioned guarantees.
While this modification increases the number of paths we have to extend to $k\cdot \Gamma$ at worst, such cases are very unlikely to occur in average. Notably, our experiments show that 3D Pareto fronts for each destination contain usually less than $\approx 10$ elements at most on realistic topologies, meaning that a small $k$ would be sufficient in practice.
In summary, \bestcop is efficient and exact to deal with simple instances and/or when $t \geq \gamma$, while it provides approximated but bounded solutions for difficult instances if $t < \gamma$ to remain efficient and so scalable even with massive scale IP networks.

\section{Performance Evaluation}
\label{sec:evaluation}

In this section, we evaluate the computation time of our solution.
We start by evaluating \bestcop on various flat network instances, ranging from worst-case scenario to real topologies, and compare it to another existing approach based on the \textit{Dijkstra} algorithm, \samcra~\cite{SAMCRA}.
Then, after having introduced our multi-area topology generator, we evaluate the extended variant of our solution, \bestcope, on massive scale networks.
In the following, we consider our discretization to be exact (\ie $\Gamma$ is high enough to prevent loss of relevant information).

\red{To conduct our evaluations, we consider that:}
\begin{itemize}
 \item $c_0 = \mathit{MSD} = 10$, as it is close to the best hardware limit;

\item $L = 2$: \red{while some pairs of nodes may have more than two parallel links connecting them in $G$, we argue that, on average in $G'$, one can expect that the total number of links in $E'$ is lower than $2|V|^2$.}

 \item $\Gamma = 1000$, although this value is tunable to reflect the expected product trueness-constraint on $M_1$, we consider here a fixed delay grain of 0.1ms (so an accuracy level of $\gamma = 10$) regarding a maximal constraint $c_1 =$ 100ms. This $\Gamma$ limitation is realistic in practice and guarantees the efficiency of \bestcop even for large complex networks as it becomes negligible considering large $|V|$.

\end{itemize}

\red{Note that the delays fed to \samcra are \emph{also} discretized in the same fashion as for \bestcop, allowing the number of non-dominated paths that \samcra has to consider to be bounded and reduced.}
\red{In addition, as \samcra is not designed with the SR Graph in mind, it is difficult to know which of the two methods mentioned in Section~\ref{ssec:srgraph} is the most suited to consider the segment metric. Thus, we compare ourselves to both variants. First, we run \samcra on the fully-meshed SR Graph, which allows to use the \samcra algorithm nearly as-is. We call this variant \samcrasrg. Second, we implement our conversion algorithm, which allows to efficiently convert multi-metric paths to segment lists. This method requires however further modification of the \samcra algorithm, not only by adding the conversion algorithm but also by extending its dominancy checks (details can be found in Appendix~\ref{app:LCA}). We refer to this variant as \samcralca.\\}

All our experiments are performed on an Intel(R) Core(TM) i7-9700K CPU @ 3.60GHz $\times$ 8.

\subsection{Computing Time \& Comparisons for Flat Networks}
This section illustrates the performance of our algorithm \bestcop using three flat network scenarios.
In particular, we do not take advantage of any area decomposition to mitigate the computing time.

\red{As shown in \cite{nca2020} by forcing \bestcop to explore its full iteration space, our algorithm \emph{cannot} exceed 80s at worst on topologies of \num{1000} nodes. This upper bound can however be drastically reduced through the use of multi-threading, reaching a worst-case of $\approx 10s$ when relying on 8 threads, highlighting the parallel nature of our algorithm~\footnote{Additional experiments on a high-performance grid showed that \bestcop may reach a speed-up of 23 when running on 30 cores.}. Additional information can be found in Appendix~\ref{upper_bound} for the interested reader. }

\red{We will see that in practice, \bestcop is far from reaching these upper bounds, even on random networks. In the following, we will evaluate \bestcop and compare it to \samcra in two main scenarios: a real network with real link valuations, and random networks of up to \num{10000} nodes.\\}

\begin{figure*}[!ht]
    \centering
    \caption{Computation time of \bestcop and \samcra on various experiments. Although the results can be close when considering mono-threaded \bestcop and \samcra, our algorithm always outperforms its competitor when using multi-threading. In some cases, multi-threading is not even necessary}\label{fig:allEvals}
        \subfloat[Computation time of \bestcop (1, 2, 8 threads) \& \samcra on a real network with real link valuation.]{\includegraphics[width=0.32\linewidth]{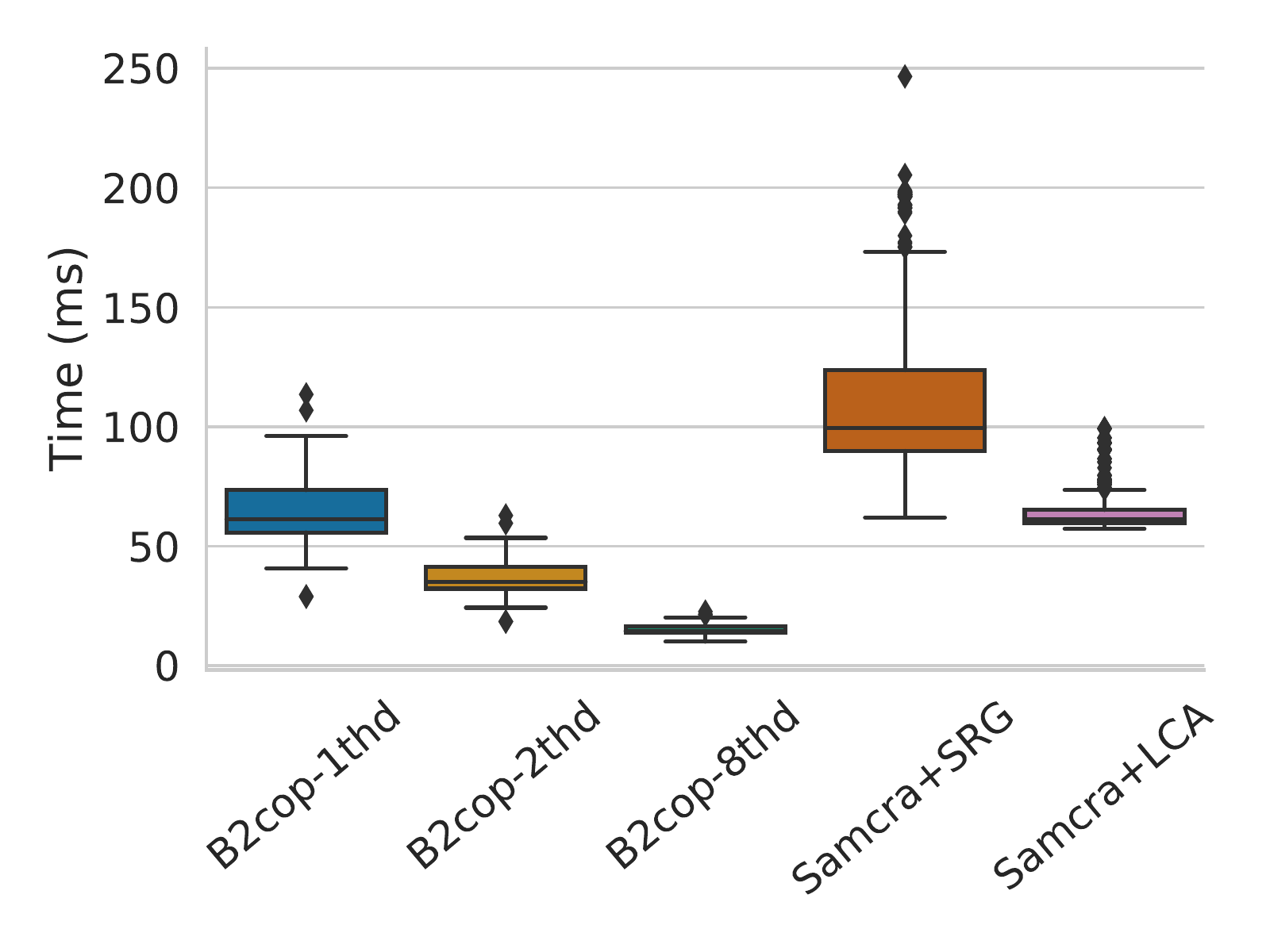}}\label{fig:geoSprint}
        \hspace{0.1em}
        \subfloat[Computation time of \bestcop \& \samcra on random networks with random valuation for $|V| = 100$ to \num{1000}]{\includegraphics[width=0.32\linewidth]{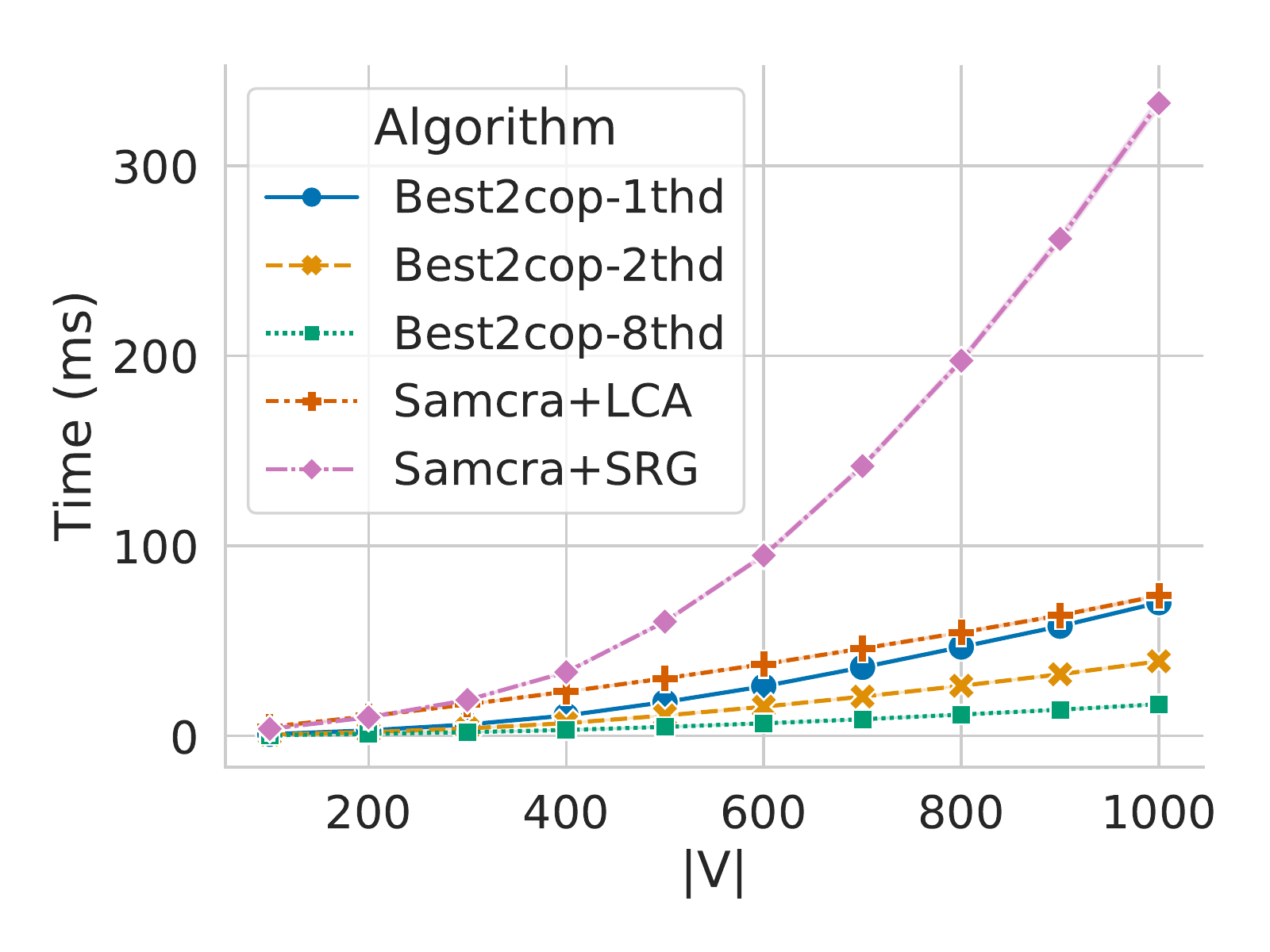}}\label{fig:1K}
        \hspace{0.1em}
        \subfloat[Computation time of \bestcop \& \samcra on random networks with random valuation for $|V| =$ \num{1000} to \num{10000}]{\includegraphics[width=0.32\linewidth]{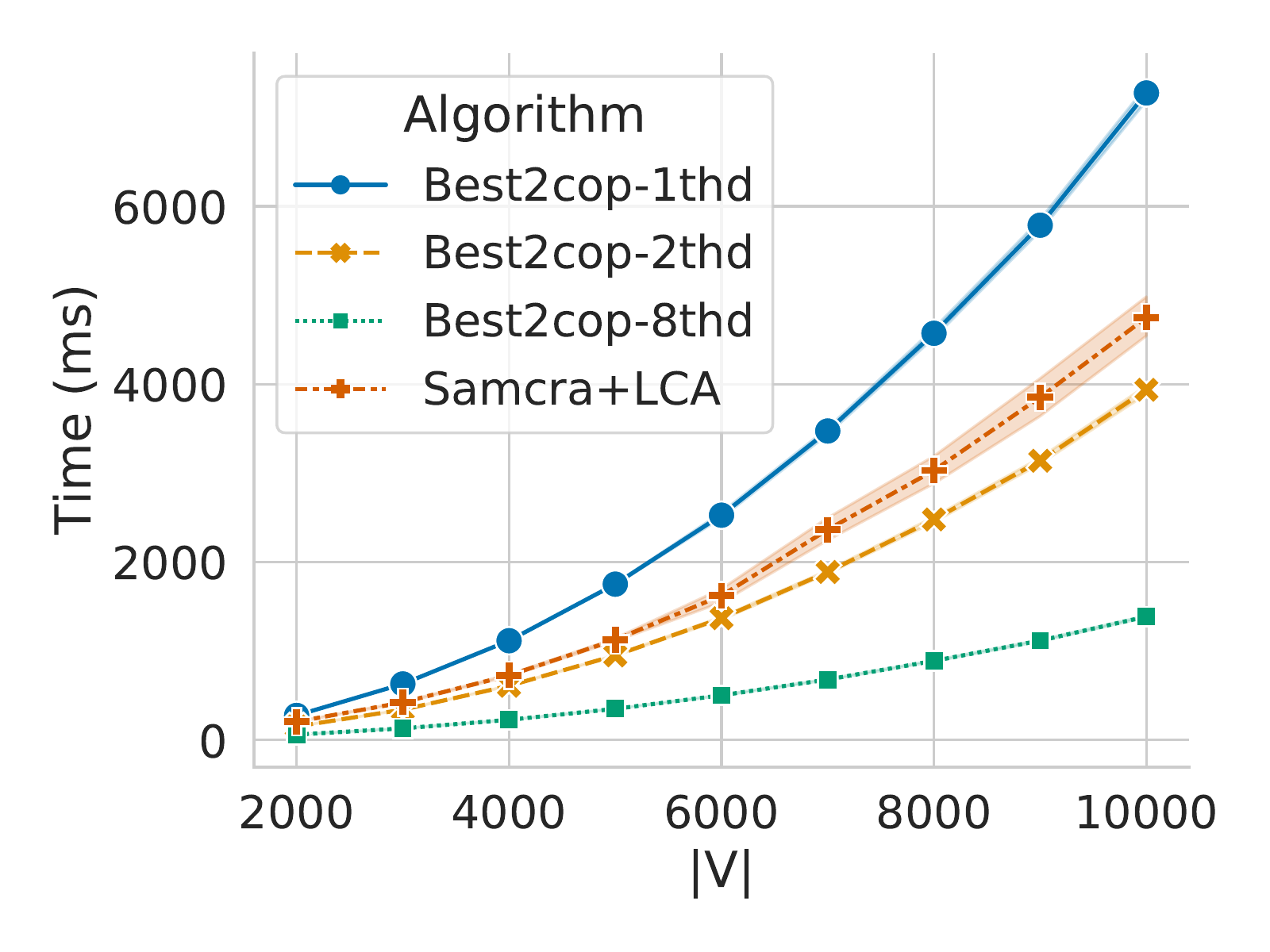}}\label{fig:10K}
\end{figure*}

\textbf{Real network.}
\red{We start by considering a real IP network topology. We use our largest available ISP topology, consisting of more than 1100 nodes and 4000 edges. \red{This topology describes the network of a Tier-1 operator and is not available to the public}~\footnote{While public topology datasets exist, these topologies are often too small for our use-case and/or do not possess any link valuation.}.
While the IGP costs of each link were available, we do not have their respective delays.
We thus infer delays thanks to the available geographical locations we do possess: we set the propagation delays as the orthodromic distances between the connected nodes divided by the speed of light, and run both algorithms on the obtained topology. The execution times are then shown in Fig.~\ref{fig:geoSprint}. \bestcop (1, 2, 8 threads) and \samcra (with LCA and SRG) are run for every node as source, resulting in the distributions showcased.}

\red{One can see that \samcrasrg (\ie \samcra run directly on the SR Graph) exhibits the worst execution times out of all the algorithms and variants presented, averaging at $100$ms, and reaching $250$ms at worst.
Interestingly, this shows that exploring the SR Graph itself may be detrimental to some algorithms (in particular priority-queue-based ones) due to its high density. Hence, algorithms not designed to take advantage of its features may fare better by exploring the original, sparser topology, and using the information within the SR Graph to compute the number of necessary segments to encode the paths being explored. This is visible on \samcralca computation times. Our construct, coupled with our conversion algorithm, allowed \samcralca to reach computation times very similar to the mono-threaded variant of \bestcop, with an average execution time of $\approx 60$ms. Note that \bestcop, which runs on the SR Graph itself, shows equivalent execution time when relying on a single thread. However, when relying on multiple threads, \bestcop outperforms its competitor in all runs, reaching a computation time of $\approx 25$ms at worst when using 8 threads, \ie three times faster than its competitor.}



These low execution times are not only due to the efficiency of the algorithms presented, but also to the realistic link valuations, which tend to be correlated in practice. In realistic cases, \bestcop can thus work with $\Gamma > 1000$ and so with a supported accuracy $t >> 0.1$ms (to deal with a micro-second grain) for small enough delay constraint (\ie, $<< 100$ms), while keeping the execution time in the hundreds of milliseconds.
One may notice that (almost) perfectly aligned metrics reduce the usefulness of any DCLC-like algorithm, but such metrics are not always aligned for all couples in practice (even with realistic cases, we observe that the average size of the 3D Pareto front is strictly greater than $1$, typically $\approx 4$). Our algorithm deals efficiently with easy cases and remains exact\footnote{Or at least near exact for difficult instances having both high trueness and exponential increasing Pareto fronts.} and efficient for more complex cases, \eg with random graphs.\\



\red{\textbf{Random networks. } The number of publicly available large topologies being limited, we continue our evaluation with random scenarios to assess the computation time of the aforementioned algorithms on a larger number of scenarios.}

\red{We generate raw connected graphs of $|V|$ nodes by using the Erdos-Rényi model. The generated topologies have a degree of $log(|V|)$. Both the delays and the 
IGP weights are picked uniformly at random. IGP weights are chosen within the interval $[1, 2^{32} / |V| / 10]$, to ensure that no paths possesses a cost higher than $2^{32}$. Delays are chosen within the interval $[0, 0.01 \times c_1]$, with $c_1 = 100ms$, to ensure that a high number of feasible paths exist.}

\red{We start by running \bestcop and \samcra for $|V|$ ranging from 100 to 1000 (with steps of 100). To account for the randomness of both valuation functions, we generate 30 differently weighted distinct topologies for each value of $|V|$. We run \bestcop and \samcra for 30 nodes selected as representative sources (randomly picked uniformly). Computing times are shown in Fig~\ref{fig:1K}.}

\red{While the computation times are slightly higher (due to the random valuations which lead to a higher number of non-dominated paths), the results are similar to the previous experiment. These results display more clearly that \samcra does not benefit from exploring the SR Graph. Indeed, on random networks, \samcrasrg is about 7 times slower than the other algorithms displayed. 
However, as on real networks, \samcralca shows results close (if not equal) to \bestcop execution time. Nevertheless, even on random networks, \bestcop remains three times faster than its competitor when relying on 8 threads.}

\red{Interestingly, \bestcop mono-threaded and \samcralca computation times get closer as $|V|$ increases. Thus, we continue our comparison on networks of \num{2000} to \num{10000} nodes. Given the long computation times of \samcrasrg, we here only consider \samcralca. The results are shown in Fig.~\ref{fig:10K}.
On such networks, \bestcop (mono-threaded) exhibits an execution time of $7s$, while \samcralca remains under $5s$. The quadratic complexity of \bestcop (whose main factor is $|V|^2$) is here clearly visible. 
\samcralca exhibits a less steep growth. However, when relying on multiple-thread, \bestcop remains far more efficient. While two threads already allow to reach an execution time slightly lower than \samcra (4s), 8 threads allow \bestcop to remain $\approx 3.3$ times faster than its competitor.} \\

\red{\textbf{Summary.} The way to use the SR Graph has a high impact on the underlying algorithm. As the SR Graph is not at the core of 
\samcra's design, exploring the latter lead to high execution time due to its density. However, adding our conversion algorithm (which relies on the SR Graph data) within \samcra allowed the latter to reach competitive execution times while solving 2COP. \bestcop, which explores the SR Graph directly, exhibits an execution similar to \samcralca when relying on a single thread. When using multi-threading, \bestcop clearly outperforms its competitor in all scenarios.\\}

In any case, it is interesting to note that even \bestcop takes more than one second on networks of \num{10000} nodes. To showcase the performance of our contribution on massive-scale networks, we now evaluate the execution time of its extension, \bestcope, which supports and leverages OSPF-like area division. This version is adapted to tackle TE problems in massive-scale hierarchical networks. In the following, we only consider our approach. Indeed, the latter showcased better performance that \samcra even without relying on multiple threads when considering a topology size $|V| < 1000$, which encompasses the size of standard OSPF areas.

However, before continuing analyzing the computing time results, we first introduce our generator for massive-scale, multi-areas, realistic network having two valuations (IGP cost \& delays).



\subsection{Massive Scale Topology Generation}
\label{sec:generator}

To the best of our knowledge, \red{although such networks exist in the wild,} there are no massive scale topologies
made \red{\textit{publicly}} available which exhibit IGP costs, delays, and area subdivision.
For example, the graphs available in the topology zoo (or sndlib) datasets do not exceed 700 nodes in general. Moreover, the ones for which the two metrics can be extracted, or at least inferred, are limited to less than 100 nodes.
Thus, at first glance, performing a practical massive-scale performance evaluation of \bestcope is highly challenging if not impossible.
There exist a few topology generators~\cite{IGEN, BRITE} able to generate networks of arbitrary size with realistic networking patterns, but specific requirements must be met to generate topologies onto which \bestcope can be evaluated, in particular the need for two metrics and the area decomposition.\\

\textbf{Topology generation requirements. }
First, the experimental topologies must be large, typically between \num{10000} and \num{100000} nodes.
Second, they must possess two valuation functions as realistic as possible, one for the IGP cost and the other modeling the delay.
Third, since the specific patterns exhibited by real networks impact the complexity of TE-related problems, the generated topologies must possess realistic structures (\eg with respect to redundancy in the face of failures in particular).
Finally, for our purposes, the topology must be composed of different areas centered around a core backbone, typically with two ABRs between each to avoid single points of failure. 
Since we do not know any generator addressing such requirements, we developed YARGG (Yet Another Realistic Graph Generator), a python tool(Code available online~\footnote{\url{https://github.com/JroLuttringer/YARGG}}) which allows one to evaluate its algorithm on massive-scale realistic IP networks. In the following, we describe the generation methods used to enforce the required characteristics.\\

\textbf{High-level structure.}
One of the popular ISP structure is the three-layers architecture~\cite{CampusNe63:online}, illustrated in Fig.~\ref{fig:area}. The \textit{access layer} provides end-users access to the communication service. Traffic is then aggregated in the \textit{aggregation layer}. Aggregation routers are connected to the
\textit{core routers} forming the last layer. The aggregation and access layers form an area, and
usually cover a specific geographical location. The core routers, the ABRs connecting the backbone other areas, and their links, form the backbone area that interconnect the stub areas, \ie the aggregation and access layers of the different geographical locations. Core routers are ABRs and belong both to a sub-area, per couple of 2 for redundancy.
Thus, while the access and aggregation layers usually follow standard structures and weight systems recommended by different network vendors, the backbone can vastly differ among different operators, depending on geographical constraints, population distribution, and pre-existing infrastructure. Taking these factors into account, YARGG generates large networks by following this 3-layer model, given a specific geographical location (\eg a given country).\\

\begin{figure}
    \centering
    \includegraphics[scale=0.15]{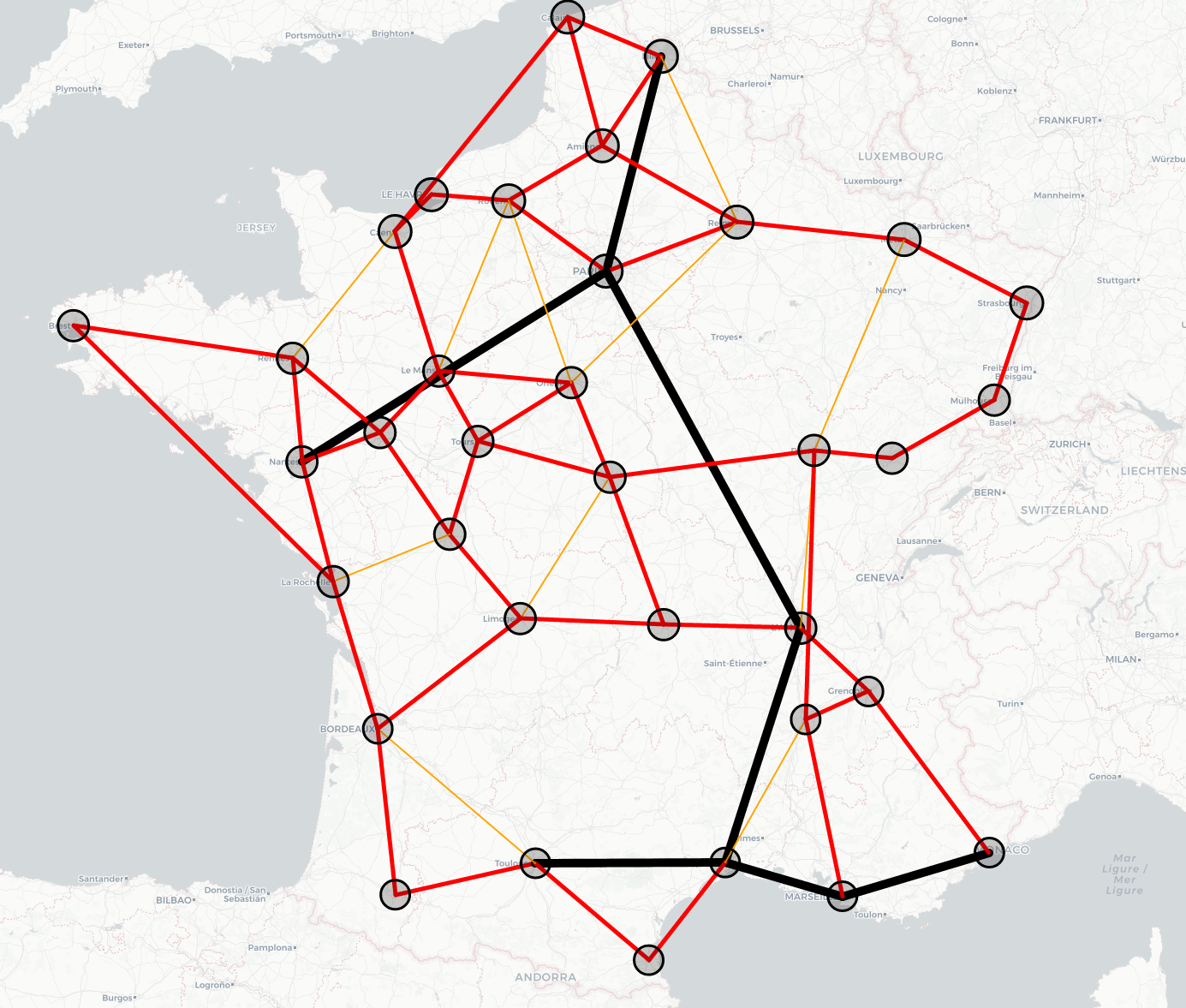}
    \caption{Core network (before step 5) generated by YARGG in France. While we consider the road distances, we represent the links in an abstract fashion for readability purposes. The color and width of the links represent their bandwidth (and thus their IGP costs).}
    \label{fig:coreFR}
\end{figure}

\textbf{Generating the core network and the areas. }
YARGG is a heuristic that generates the core network by taking the aforementioned considerations into account: existing infrastructures, population, and geographical constraints.
An example of a core network as generated by YARGG may be
seen in Fig.~\ref{fig:coreFR}.
Given a geographical location (e.g., a country or a continent), YARGG builds the structure of the core network by
 \begin{enumerate}
    \itemsep0em 
     \item \emph{Extracting the $x$ most populated cities in the area}. Close cities are merged in a single entity.\red{ The merge trigger value may change (the exact values used here can be found in ~\cite{topologies})}.
     \item \emph{Constructing a minimum Spanning-Tree covering all cities of the area}, using road distances as the metric. \red{Links between cities totaling more 
     than 30\% of the total population are normalized in order to be highly prioritized.\footnote{The parameters tuned may be easily modifier. For now, the latter are purely empirical.}}
     \item \emph{Removing articulation-points.} \red{YARGG picks one bi-connected component, and adds the smallest link (in terms of road distances) that 
     bridges this component with another. This process is repeated until no articulation point remains (\ie the topology is bi-connected).}
     \item \emph{Adding links increasing the connectivity and resilience for a limited cost.} \red{YARGG considers all links meeting certain criterias. The two cities/nodes must be closer than 20\% of the largest road distance. Their current degree must be lower than 4. The link, if added, should reduce the distance (and so, the delay) between the nodes by at least 25\%. Among these links, YARGG adds the one with the highest attractiveness, expressed as the sum of the distance reduction and the population of the cities (normalized).}
     \item \emph{Doubling the obtained topology.} The topology is doubled. There are now two nodes (/routers) per city. Links are added between the two routers of the same city, making the topology tri-connected.
 \end{enumerate}

The couple of routers located at each city within this generated backbone area become the ABRs between the backbone and their area, which is generated next.\\

\begin{figure}
    \centering
    \includegraphics[scale=0.6]{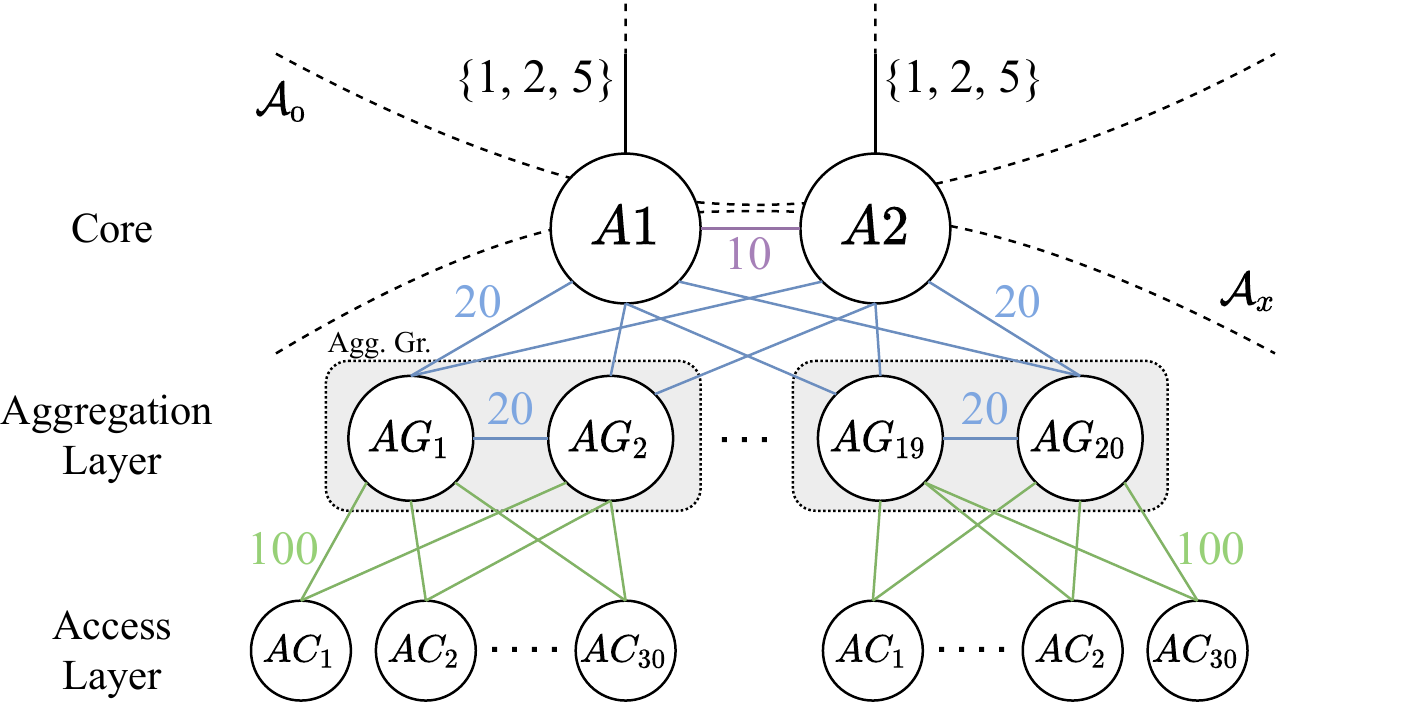}
    \caption{Weights and structures of an area generated by YARGG.}
    \label{fig:area}
\end{figure}

\textbf{Access \& aggregation layers. }
These last two layers make up a non-backbone area and span a reduced geographical area.
Thus, one access and one aggregation layer are located in each city considered by YARGG.
Several network equipment vendors recommend a hierarchical topology, such as the three-layer hierarchical model~\cite{Hierarch47:online}. An illustration can be seen in Fig.~\ref{fig:area}.
Simply put, there should be two core routers (the ABRs) at the given location (a city in YARGG's case).
Each core router is connected to all aggregation routers. For better resiliency, the aggregation layer is divided into aggregation groups, composed of two connected routers.
Finally, routers within an aggregation group are
connected to access-layer routers. To achieve areas of $\approx$ 300 nodes, we consider 30 access routers per aggregation group. This results in a large, dense, and realistic graph. \\

\textbf{Weights. } In the backbone, the weights generated by YARGG are straightforward.
The delays are extracted from the road distances between the cities, divided by $60\%$ the speed of light (close to the best performing fiber optic). The IGP cost is 1 for links between large cities since these links usually have a high bandwidth (in black in Fig.~\ref{fig:coreFR}), 2 for standard links, necessary to construct a tri-connected graph (added at step 3, in red in Fig.~\ref{fig:coreFR}), and 5 for links that are not mandatory, but that increase the overall connectivity (added at step 4, in orange in Fig.~\ref{fig:coreFR}).

Within an area, the IGP costs follow a set of realistic
constraints, according to two main principles:
(i) access routers should not be used to route traffic (except for the networks they serve), (ii) links between routers of the same hierarchical level (\eg between the two core routers or the two aggregation routers of a given aggregation group) should not be used, unless necessary (\eg multiple links or node failures). These simple principles lead to the
IGP costs exhibited in Fig.~\ref{fig:area}.
The delays are then chosen uniformly at random.
Since access routers and aggregation routers are close geographically, the delay of their links is chosen between $0.1$ and $0.3$ms. The delay between aggregation routers and core routers is chosen between $0.3$ms and the lowest backbone link delay.
Thus, links within an area necessarily possess a lesser delay than core links.\\

\textbf{Summary. } YARGG computes a large, realistic, and multi-area topology. The backbone spans a given geographical location and possesses simple IGP weights and realistic delays. Other areas follow a standard three-layer hierarchical model. Weights within a stub area are chosen according to a realistic set of usual ISP constraints. Delays, while chosen at random within such areas, remain consistent with what should be observed in practice.

\subsection{Computing Time for Massive Scale Multi-Areas Networks}
\begin{figure*}
    \centering
    \includegraphics[scale=0.45]{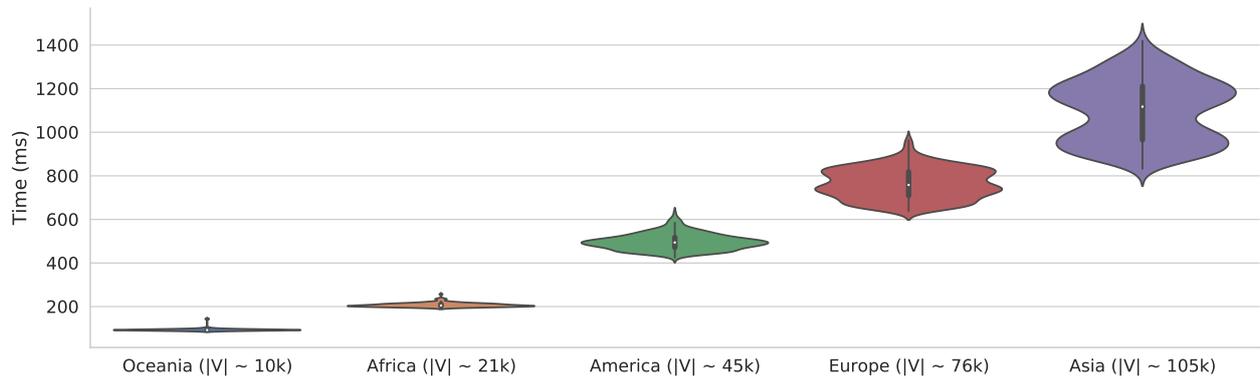}
    \caption{\bestcope computation time on 5 continent-wide topologies generated by YARGG.}
    \label{fig:bcope}
\end{figure*}

Using YARGG, we generate five massive scale, continent-wide topologies, and run \bestcop on each one of them. The topologies ranges
from \num{10000} to \num{100000} nodes. 
Each non-backbone area possess around \num{320} nodes. The topologies, their 
geographical representations and some of the associated network characteristics can be found online~\cite{topologies}.

We run \bestcop on each ABR as a source (around $|V| / 320 \times 2$ sources). The time corresponding to the message exchange of the computed Pareto front (step 2 of \bestcope) is not taken into consideration. Thus, 
the computation time showcased is the sum of the average time taken by ABRs to perform the preliminary intra-area \bestcop (and the distances to segment lists conversions) plus the time taken to perform the $|V| / 320 \times 2 - 2$ Cartesian products (for all other ABRs of all other areas).

Note that we consider an ABR as a source and not an intra-area destination.
In practice, the ABR would send the computed distances to the intra-area nodes, who in turn would have to perform a Cartesian product of these distances with its own distances to said ABR.
However, both the ABR and the intra-area node have to consider the same number of destinations ($|V|$), and the results computed by the ABR can be sent as they are generated (destination per destination), allowing both the ABR and the intra-area nodes to perform their Cartesian product at the same time. In addition, intra-area nodes may benefit from several optimizations regarding their Cartesian product, if the constraints of the desired paths are known (these optimizations will not be used nor detailed in this paper).
For these reasons, we argue that the time measured here, using an ABR as a source, is representative of the total actual time required, \ie the overall worst time for the last treated destination at each source.

The results of this experiment are shown in the violin plot of Fig.~\ref{fig:bcope}.
By leveraging the network structure, \bestcope exhibits very good performance despite the scale of the graph.
For \num{10000} nodes, \bestcope exhibits a time similar to the one taken by its flat variant for $|V| = 2000$.
Furthermore, \bestcope seems to scale linearly with the number of nodes, remaining always under one second for $|V| \approx \num{75000}$. Even once the network reaches a size of $\approx \num{100000}$, \bestcope is able to solve 2COP in less than one second for a non-negligible fraction of the sources, and never exceeds 1.5s.

Note that the times showcased here rely on a single thread.
While \bestcope's Cartesian product can be parallelized locally (both at the area and the destination scale), this parallelization hardly has any effect. This is explained by the fact that these individual computations are in fact fairly efficient, hence the overhead induced by the creation and management of threads is heavier than their workload. In addition, since \bestcop deals with very large topologies, some complex memory-related effects might be at play. Indeed, we notice these results to surprisingly vary depending on the underlying system, operating system, and architecture due to the differences in terms of memory management.

Thus, while massive scale deployments seem to a priori prevent the usage of fine-grained TE, their structures can be leveraged, making complex TE problems solvable in less than one second even for networks reaching $\num{100000}$ nodes. The computations performed for each 
area can also be distributed among different containers 
within the cloud, if handled by a controller.

\section{Conclusion}
While the overhead of MPLS-based solutions lead to a TE winter in the
past decade, Segment Routing marked its rebirth. In particular, SR enables the
deployment of a practical solution to the well-known DCLC problem. \red{Our algorithm,
\bestcop~\cite{nca2020} (Best Exact Segment Track for 2-Constrained Optimal Paths), iterates on the SR Graph to natively solve DCLC in SR domains with strong guarantees, through simple and efficient data structures and concepts.}


In this paper, we went several steps further with the following achievements:
\begin{itemize}
\item experimentally demonstrating that SR is a relevant technology to deploy DCLC paths;
\item for massive scale ISPs relying on area-subdivision, we extend \bestcop to \bestcope, partitioning 2COP into smaller sub-problems, to further reduce its overall complexity (time, memory and churn);
\item through extensive evaluations, relying on multi-threading and our own multi-metric/multi-areas network generator, we have shown that \bestcope is very efficient in practice. This was confirmed through a comparison with a relevant state-of-the-art algorithm, which benefited from a novel path to segment multi-metric conversion algorithm that we designed.
\end{itemize}

To the best of our knowledge, \bestcop is the first practically exact and efficient solution for 2COP within SR domains, making it the most practical candidate to be deployed for such a TE flavor in today ISPs.
It is able to solve 2COP on massive scale realistic networks having \num{100000} nodes in less than a second.
For large areas having thousands of routing devices, we have shown that \var{BEST2COP(-E)} can easily deal with random topologies while its competitors do not scale.
Finally, more advanced and flexible structures can be envisioned to deal with high trueness requirements, while deploying novel flex-algo strategies can help to mitigate the rare SR limit drawbacks.

\section*{Acknowledgment}
This work was partially supported by the French National Research Agency (ANR) project Nano-Net under contract ANR-18-CE25-0003.

\bibliographystyle{elsarticle-harv}
\bibliography{reference}
\newpage
\appendix
\section{The BEST2COP Algorithm: A detailed description}\label{app:b2cop}

This Appendix aims at describing \bestcop in a complete and detailed fashion. It also showcases the pseudo-code of our algorithm.

\textbf{\bestcop's main procedure} is shown in Alg.~\ref{algo:bestmcp}. 
The variable \varpfront, the end result returned by our algorithm, contains, for each iteration,
the Pareto front of the distances towards each node \varn. In other words,
\varpfront[i] contains, at the end of the $i^{\mathit{th}}$ iteration, all non-dominated $(M_1, M_2)$ distances of feasible paths towards each node \varn.

The variable \vardist is used to store, for each vertex, the
best $M_2$-distance found for each $M_1$-distance to each node.
Since the $M_1$-distance of any feasible path in $G'$ is bounded by $\Gamma$, we can store
these distances in a static array \vardist[\varv].
Note that during iteration $i$, \vardist will contain the Pareto front of the current iteration (non-dominated distances of $i$ segments) in addition to
distances that may be dominated. Keeping such paths in \vardist allows us to pre-filter paths before ultimately extracting the Pareto front of the current iteration from \vardist later on. This variable is used in conjunction with \varpfcand, a boolean array to remember
which distances within \vardist were found at the current iteration.

The variable \extendableIndex is a simple list that contains, at iteration $i$, all non-dominated distances discovered at iteration $i-1$.
More precisely, \extendableIndex is a list of tuples $(u, d\_list)$, where $d\_list$ is the list of the best-known paths towards $u$. The variable \extendableIndexbis is a temporary variable allowing to construct \extendableIndex.

After the initialization of the required data structures, the main loop starts. This loop is performed
MSD times, or until no feasible paths are left to extend.
For each node $v$, we extend the non-dominated distances found during the previous
iteration towards $v$ (originally (0,0) towards \texttt{src}). Extending paths in this fashion allows to easily parallelize the main \texttt{for} loop (\eg through a single \textit{pragma} Line~\ref{pragma}). Indeed, each thread can manage a different node $v$ towards which to extend the non-dominated paths contained within \extendableIndex. As threads will discover distances towards different nodes $v$ (written in turn in structures indexed on $v$), this prevents data-races. Note that in raw graphs, this method may lead to uneven workloads, as not all paths may be extendable towards any node $v$. However, since an SR graph is (at least) complete, any path may be extended towards any node $v$, leading to similar workloads among threads.

\begin{algorithm} 
\footnotesize
\caption{ BEST2COP(G',src)}\label{algo:bestmcp}
\varpfront~:= Array of size MSD \\
\ForAll{i $\in [0..MSD]$}{
    \varpfront[i] := Array of size $|V|$ of Empty Lists
    }
 \add(\varpfront[0][src], (0,0))\\

\vardist~:= Array of size $|V|$\\\label{line:init var all}
\ForAll{\varn $\in V$}{
   \vardist[\varn] := Array of size $\Gamma$\\
}
 \vardist[\var{src}][0] = (0,0) \\
optimal constrained
 \extendableIndex := Empty List (of Empty Lists)\\
\add(\extendableIndex, (\var{src}, [~(0,0)~]))\\

\extendableIndexbis := Array of size $|V|$ of Empty Lists\\

 \vari := 1, \varmaxd := 0\\
 \While{ \extendableIndex $\neq $ [~] \kwand \vari $\leq$ MSD}{
    {\color{purple}\texttt{\#pragma omp parallel for}}\label{pragma}\\
    \ForAll{\varv $\in V$}{

    \varpfcand~:= Array of size $\Gamma$\\
     \varnb, \varimax~:= \extend(v, \extendableIndex, \varpfcand, \vardist[v])\\
    \varmaxd = \kwmax(\varimax, \varmaxd)\\
  \tcp{How to iterate on \vardist to get new PF}
  \uIf{ \varnb log \varnb + \varnb + |\varpfront[i-1][\varv]| $<$ \varmaxd\label{choice}}
  {
   \var{d1\_it} := \mergesortd(\varpfront[\vari-1][\varv], \varpfcand)\label{merge}
  }
  \Else{\var{d1\_it} := [0\dots \varmaxd]}

    \extendableIndexbis[\varv] = [~]\\
    \tcp{Extract new PF from dist}
    \compute(\extendableIndexbis[\varv], \varpfront[\vari][\varv], \varpfcand, \var{d1\_it}, \vardist[v])
    }
    \tcp{Once each thread done, gather ext. paths}
    \extendableIndex = []\\ \label{unpar1}
    \ForAll{\varv $\in V~|~$ \extendableIndexbis[\varv] $\neq $ [~]}{
    \add (\extendableIndex, (\varv,\extendableIndexbis[\varv])) \\\label{unpar2}

    }

  \vari = \vari + 1\\
}
\Return \varpfront
\end{algorithm}

The routine \texttt{\textbf{ExtendPaths}}, detailed in Alg.~\ref{algo:extendsPaths}, takes the list of extendable paths, \ie non-dominated paths discovered at the previous iteration, and a node $v$. It then extends the extendable paths to $u$ further towards $v$.
The goal is to update \vardist[\varv] with new distances that may belong to the Pareto front.
Before being added to \vardist[\varv], extended distances go through a pre-filtering.
Indeed, the newly found distance to $v$ may be dominated or may be part of the Pareto front. While this check is performed thoroughly later, we can
already easily prune some paths: if the new paths to $v$ violate either constraint, there is no point in considering it. 
Furthermore, recall that \vardist stores, for all $\Gamma$ $M_1$-distances towards a node,
the best respective $M_2$-distance currently known. Thus, if the new $M_2$-distance is worst than the one previously stored in \vardist at the same $M_1$ index, this path is necessarily dominated and can be ignored. 
Otherwise, we add the distances to \vardist
and update \varpfcand to remember that a new distance which may be non-dominated was added during the current iteration. 
Note that \texttt{ExtendPaths} returns the number of paths updated within \vardist, as well as the highest $M_2$-distance found. This operation is performed for efficiency reasons detailed here.

Once returned, \vardist contains distances either dominated or not.
We thus need to extract the Pareto front of the current iteration.
This operation is performed in a lazy fashion once for all new distances (and not for each edge extension).
Since this Pareto front lies within \vardist, one can simply walk through \vardist by order of increasing $M_1$ distance from $0$ to the highest $M_1$ distance found yet
and filter all stored distances to get the Pareto front of the current iteration.
This may not be effective as most of the entries of \vardist may be empty.

However, the precise indexes of all active distances that need to be examined (to skip empty entries) can be constructed by merging and filtering the union of the current Pareto front and the new distances (\varpfcand).
Thus, if the sorting and merging of the corresponding distance indexes is less costly than walking through \vardist, the former method is performed in order to skip empty or useless entries. Otherwise, a simple walk-through is preferred. The merging of the $M_1$ distances of the Pareto front and new $M_1$ distances is here showcased at high-level (Line~\ref{merge}, Alg.~\ref{algo:bestmcp}). The usage of more subtle data structures in practice allows to perform this operation at the cost of a simple mergesort.

After the list of distance indexes to check and filter is computed, the actual Pareto front is extracted during the \texttt{\textbf{CptExtendablePaths}} procedure, as shown in Alg.~\ref{algo:ComputeExtendable}.
This routine checks whether paths of increasing $M_1$ distance do possess a better $M_2$ distance than the one before them. If so, the path is non-dominated and is added to the Pareto front,
as well as to the paths that are to be extended at the next iteration.
Finally, once each thread is terminated, \extendableIndexbis contains
$|V|$ lists of non-dominated distances towards each node. These lists are merged within \extendableIndex, to be extended at the next iteration.

Note that most approximations algorithms relying on interval partitioning or rounding do not
bother with dominance check. In other words, the structure they maintain is similar to
our \vardist : the best $M_2$ distance for each $M_1$ distance to a given node. The latter
may thus contain dominated paths which are considered and extended in future iterations.
In contrast, by maintaining the Pareto front efficiently, we ensure to consider the minimum
set of paths required to remain exact, and thus profit highly from small Pareto front.\\

\begin{algorithm} 
\footnotesize
 \caption{\footnotesize ExtendPaths(v, \extendableIndex, \varpfcand, \vardistv)}\label{algo:extendsPaths}

      \varimax = 0, \varnb = 0\\
      \ForAll{(\varu, \vardlist) $\in$ \extendableIndex }{
            \ForAll{\varl $\in$ E'(\varu,\varv)}{
                \ForAll{(\varddu, \vardu) $\in$ \vardlist}{
                    \varddv = \varddu + $w1(\varl)$\\\label{ext1}
                    \vardv = \vardu + $w2(\varl)$\\\label{ext2}
                    \tcp{Filters: constraints and dist}
                    \If{\varddv $\leq$ c1 \kwand \vardv $\leq$ c2 \label{prefilter1}\\\kwand \vardv $<$ \vardistv[\varddv]\label{prefilter2}}{
                         \vardistv[\varddv] = \vardv\label{add1}\\
                         \If{\not \varpfcand[\varddv]}
                         {\varnb++\\
                         \varpfcand[\varddv] = \true\label{add2}\\
                         }

                         \If{\varddv > \varimax}
                         {\varimax = \varddv }
                    }
                }
            }
        }

    \Return \varnb, \varimax \\
\end{algorithm}

\begin{algorithm}
\footnotesize
    \caption{ \footnotesize CptExtendablePaths (\extendableIndexbisv,
     \varpfrontiv, \varpfcand, \var{d2\_it}, \vardistv)}\label{algo:ComputeExtendable}
        \varlastd = $\infty$\\
        \ForAll{\vardd $\in$ \var{d1\_it}}{
            \If{\vardistv[\vardd] $<$ \varlastd}{
                \add(\varpfrontiv, (\vardd,\vard))\\
                \varlastd = \vardistv[\vardd]\\
                \If{\varpfcand[\vardd]}{
                \add(\extendableIndexbisv, (\vardd,\vard))
                }
            }

        }
 \end{algorithm}

\textbf{The output of \bestcop.} When our algorithm terminates, the \var{pfront} array contains, for each segment number, all the distances of non dominated paths from the source $s$ towards each destination $d$. To answer the 2COP problem, for each $d$ and for all (stricter sub-)constraints $c_0'\leq c_0$, $c_1'\leq c_1$ and $c_2'\leq c_2$, we can proceed as follows in practice:
\begin{itemize}
    \item for $f(M_1, c_0', \Gamma, c_2', s, d)$, \ie to retrieve the distance from $s$ to $d$ that verifies constraints $c_0'$ and $c_2'$ minimizing $M_1$, we look for the \textit{first} element in
    $\varpfront[c_0' - 1][d]$ verifying constraint $c_2'$ (the first feasible distance is also the one minimizing $M_1$ because they are indexed on the later metric).
    \item for $f(M_2, c_0', c_1', \infty, s, d)$, we look for the \textit{last} element in
    $\varpfront[c_0' - 1][d]$ verifying constraint $c_1'$. The path minimizing $M_2$ being, by design, the last element.
    \item to compute $f(M_0, \infty, c_1', c_2', s, d)$, let us first denote $k$ the smallest integer such that $\varpfront[k][d]$ contains an element verifying constraints $c_1'$ and $c_2'$. The resulting image
     is then any of such elements in $\varpfront[k][d]$.
\end{itemize}
As one might notice, computing $f(M_j, c_0', c_1', c_2', s, d)$, $j=0,1,2$ cannot always be achieved in constant time (for $j=0$ and sub-constraints in particular). Indeed, we favor a simple data structure. A search in an ordered list of size $\Gamma$ is needed for stricter constraints (and may be performed $\log(MSD)$ times when optimizing $M_0$). To improve the time efficiency of our solution, each $\varpfront[i][d]$ may be defined as or converted into a static array in the implementation.

Finally, for simplicity, we did not show in our pseudo-code the structure and operations that store and extend the lists of segments. In practice, we store one representative of the best predecessors and a posteriori retrieve the lists using a backward induction for each destination.

\section{SR Graph \& Live Conversion Algorithm}\label{app:LCA}

The multi-metric SR Graph may be used in different ways. The path computation algorithm may be run directly on it. While this induces the cost of exploring a fully-meshed graph, it allows to treat the segment metric as a standard graph metric (in the form of the number of edges). Another way to use the SR Graph is to use the information it contains to perform conversions. The algorithm thus explores the original, sparser topology and converts the explored paths into segment lists on the fly to consider their number of segments. Note that the information within the SR Graph may be stored in different fashions (\eg by keeping the results of the APSP computation separated by source rather than merging it in a single SR Graph). However, the multi-metric APSP computation itself is mandatory.

The second method requires to design an efficient conversion algorithm, able to transform a path into a minimal segment list which respects the cost \emph{and} delay of the original path. This conversion is not trivial, as one must take into account the (forced) path diversity brought by ECMP. Indeed, using a node segment implies that the packet may follow any of the ECMP paths, which may possess heterogeneous delays. On the other hand, relying too much on adjacency segments will not lead to the minimal segment encoding of the considered paths. In addition, the number of segments, being an "off the graph" metric, exhibits peculiar behavior and requires to slightly extend the way path are checked for dominancy.

\subsection{Conversion Algorithm}
We start by introducing the following notations.
We consider a network $G = (V,E)$. 
Let $p = (x_0,\dots,x_l)$ be a path within $G$, with $x_i \in V$. Let $p_{[x_i, x_j]} = (x_i,\dots,x_j)$. We note $d(p)$ the couple of distances $(d_1(p), d_2(p))$ of the path $p$.


We denote $\mathit{Dag}$ a shortest path tree (regarding the $M_2$ metric, \ie the IGP cost). $\mathit{Dag}_{[u,v]}$ denotes the paths and distances from $u$ to $v$, within the shortest path tree $\mathit{Dag}$ rooted at $u$. To encode a given path, the $\mathit{Dag}$ rooted at each node $u \in V$ must be known. The latter must account for all ECMP path. Alternatively, one may compute the SR graph of the graph $G$, which contains the minimal amount of information required to perform any path encoding.

We denote $SR_G$ the SR graph of $G$, computed as explained in the main body of the paper (Section~\ref{sec:sr-graph}), 
but with the adjacency segments removed (the latter are not necessary to perform path encoding). We denote segments $S_i = (t_i,s_i, s_{i+1})$ with $s_i \in V$ and $t_i \in \{\mathit{AdjSeg}, \mathit{NodeSeg}\}$. The latter may either encode the best ECMP paths from $s_i$ to $s_{i+1}$ (in which case, $t_i = \mathit{NodeSeg}$) or a specific link between the two nodes (in which case, $t_i = \mathit{AdjSeg}$). We denote $SR(u,v)$ the best guaranteed distances when using a node segment from $u$ to reach $v$ (\ie the maximum delay among all path within $\mathit{Dag}_{[u,v]}$). In other words, it is the weight of the edge $(u,v)$ within $SR_G$. We define a list of segments as $S = (S_i)_{0 \leq i \leq l-1}$.

We now define the \emph{encoding} of a path $p$.
\begin{definition}[Encoding]~\label{def:encoding}
  Let $p = (x_0,\dots,x_l)$ be a path. Let $S = (S_i)_{0 \leq i \leq l-1}$ be a segment list. $S$ encodes the path $p$ if both conditions are verified:
  \begin{itemize}
    \itemsep0em
    \item $\forall i$, $\exists j$ such that $s_i = x_j$
    \item $\forall i$, either 
    \begin{itemize}
      \itemsep0em
      \item $p_{[s_i, s_{i+1}]}\in Dag_{[s_i, s_{i+1}]} \wedge SR(s_i, s_{i+1}) \leq d(p_{[s_i, s_{i+1}]})$
      \item $t_i = \mathit{AdjSeg}  \wedge (s_i, s_{i+1}) \in p$.
    \end{itemize}
  \end{itemize} 
\end{definition}



We then define the \emph{delay-cost path encoding problem}.

\begin{definition}[Delay-cost path encoding problem]
Given a path $p$, the delay-cost path encoding problem consists in finding a segment list $L$ that encodes $p$ with the minimal number of segments. 
\end{definition}

To solve the delay-cost path encoding problem, we start by presenting an algorithm which finds, given a path $p$, the longest prefix of $p$ encodable by a single segment. The pseudo-code of the algorithm is shown in Alg.~\ref{algo:1SLP}.

\begin{algorithm}
  \footnotesize
  \caption{ \footnotesize 1SegLongestPrefix($p$))}\label{algo:1SLP}
  \If{$p_{[x_0,x_1]} \notin \mathit{Dag}_{[x_0,x_1]}$}{\label{test1}
      \Return $(\mathit{AdjSeg}, x_0, x_1)$\\
  }

  \For{$i = 2\dots l$}{
    \If{$p_{[x_0, x_i]} \notin \mathit{Dag}_{[x_0, x_i]}$ \textbf{or} $SR(x_0, x_i) > d(p_{[x_0, x_i]})$}{\label{test2}
      \Return $\mathit{NodeSeg}, x_0, x_{i-1})$\\
    }
  }
  \Return $\mathit{NodeSeg}, x_0, x_l)$\\

\end{algorithm}



The test performed at Line~\ref{test1} checks whether the first 
edge of $p$ is within the shortest path tree of rooted at $x_0$. If so, the edge is not part of the ECMP path from $x_0$ to $x_1$. Thus, an adjacency segment is required. 
Otherwise, the algorithms checks, for each $i = 2\dots l$
whether the path $p_{[x_0, x_i]}$ is encodable in a single segment. This is equivalent to checking whether the path is within the shortest path tree rooted at $x_0$, and if the latter has, among all the ECMP paths $\mathit{Dag}_{[x_0, x_i]}$, the highest delay. Once this checks does not hold anymore, the algorithm returns.

Note that because the SR graph $SR_G$ encodes the minimal amount of information for such conversion, both conditions (Line~\ref{test1} and~\ref{test2}) can be checked easily and efficiently by relying on the SR graph $SR_G$. Indeed, the condition at Line~\ref{test2} can be expressed equivalently as $SR(x_0, x_{i-1}) + d(x_{i-1}, x_i) \neq SR(x_0, x_i)$. Similarly, Line~\ref{test1} can be expressed as $SR(x_0, x_1) \neq d(x_0, x_1)$.

We then generalize this algorithm to encode any given path $p$ with a minimal number of segments, as shown in Alg.~\ref{algo:encode}. 

\SetKwRepeat{Do}{do}{while}
\begin{algorithm}
  \footnotesize
  \caption{ \footnotesize encode($p$)}\label{algo:encode}
  $s := x_0$\\
  $S := \emptyset$\\
  \Do{$s \neq x_l$}{
    $S_i = (t, s, v)$ = 1SegLongestPrefix($p_{[s, x_l]}$)\\
    $L$.push($S_i$)\\
    $s := v$\\
  }
  \Return $L$
  
\end{algorithm}

Alg.~\ref{algo:encode} starts by finding the segment $S_i$ encoding the longest prefix in $p$ encodable in a single segment. Suppose that $S_i$ encodes the subpath $p_{[s, v]}$. The algorithm adds the segment to the segment list $S$ and repeats the procedure, considering the reminder of the path $p_{[v, x_l]}$ and $v$ as the new source. This is repeated until $v$ is equal to the last node in $p$, \ie $x_l$.\\

\begin{lemma} Let $S$ be a segment list that encodes $p$. Then, $|S| \geq |$encode($p$)$|$.
\end{lemma}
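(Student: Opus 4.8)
The plan is a ``greedy stays ahead'' argument. First I would fix the structural picture common to both encodings. Writing $p=(x_0,\dots,x_l)$, by Definition~\ref{def:encoding} any segment list $S=(S_0,\dots,S_{m-1})$ encoding $p$ has $S_i=(t_i,s_i,s_{i+1})$ with $s_0=x_0$, $s_m=x_l$, every $s_i$ a vertex of $p$ occurring in the order $p$ visits them, and each consecutive subpath $p_{[s_i,s_{i+1}]}$ encodable by a single segment (node or adjacency). Hence $S$ determines breakpoints $x_0=s_0\prec s_1\prec\cdots\prec s_m=x_l$ along $p$; likewise $\texttt{encode}(p)$ (Alg.~\ref{algo:encode}) determines breakpoints $x_0=y_0\prec y_1\prec\cdots\prec y_k=x_l$ with $k=|\texttt{encode}(p)|$. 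By construction of \texttt{encode}, $y_{j+1}$ is the endpoint returned by \texttt{1SegLongestPrefix} applied to $p_{[y_j,x_l]}$; since every such call advances by at least one edge, the $y_j$ are strictly increasing and $y_j\ne x_l$ for $j<k$. The goal $m\ge k$ reduces to showing the greedy breakpoints are never behind those of $S$.

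Two ingredients are needed, the first being the crux. \emph{Closure lemma}: if $p_{[a,b]}$ is encodable by a single segment and $c$ is a vertex on $p_{[a,b]}$, then $p_{[c,b]}$ is encodable by a single segment. For an adjacency segment this is immediate ($p_{[a,b]}$ is one edge, so $c\in\{a,b\}$). For a node segment, the hypothesis $p_{[a,b]}\in\mathit{Dag}_{[a,b]}$ and $SR(a,b)\le d(p_{[a,b]})$ says exactly that $p_{[a,b]}$ is an $M_2$-shortest path of \emph{maximal} $M_1$-distance among all $M_2$-shortest $a$--$b$ paths; $M_2$-subpath optimality gives $p_{[c,b]}\in\mathit{Dag}_{[c,b]}$, and if some $q\in\mathit{Dag}_{[c,b]}$ had $d_1(q)>d_1(p_{[c,b]})$ then $p_{[a,c]}\cdot q$ would be an $M_2$-shortest $a$--$b$ path (because $c$ lies on one such path) of strictly larger $M_1$-distance than $p_{[a,b]}$, contradicting $SR(a,b)\le d(p_{[a,b]})$; hence $SR(c,b)\le d(p_{[c,b]})$ and $p_{[c,b]}$ is node-encodable. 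The second ingredient is the specification of \texttt{1SegLongestPrefix}: on a path starting at $z$ it returns a segment whose endpoint $z'$ is the \emph{farthest} vertex on that path such that the prefix from $z$ to $z'$ admits a single-segment encoding. I would either invoke this as the correctness statement of Alg.~\ref{algo:1SLP} (which naturally precedes this lemma) or prove it directly with the same node/adjacency case split, using subpath optimality to propagate membership in $\mathit{Dag}$ and the max-delay test to stop exactly at the right place.

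With these, I would prove by induction on $j$ that for all $0\le j\le\min(k,m)$ the breakpoint $y_j$ is at least as far along $p$ as $s_j$. Base case $j=0$: $y_0=s_0=x_0$. Step: assume $j+1\le\min(k,m)$ and $y_j\succeq s_j$. If $y_j\succeq s_{j+1}$, strict progress gives $y_{j+1}\succ y_j\succeq s_{j+1}$. Otherwise $s_j\preceq y_j\prec s_{j+1}$, so $y_j$ lies on $p_{[s_j,s_{j+1}]}$, which is single-segment-encodable; by the closure lemma $p_{[y_j,s_{j+1}]}$ is single-segment-encodable, hence is a single-segment-encodable prefix of $p_{[y_j,x_l]}$, so by maximality of \texttt{1SegLongestPrefix} its returned endpoint satisfies $y_{j+1}\succeq s_{j+1}$. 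This closes the induction. Finally, if $m<k$, applying the claim with $j=m$ gives $y_m\succeq s_m=x_l$, so $y_m=x_l$; but $m<k$ forces $y_m\ne x_l$ --- contradiction. Therefore $m\ge k$, i.e. $|S|\ge|\texttt{encode}(p)|$.

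I expect the closure lemma (together with the correctness/maximality of \texttt{1SegLongestPrefix}) to be the delicate part: one must cleanly separate the node- and adjacency-segment cases, and the node-segment case relies on both $M_2$-subpath optimality and on the SR graph recording the \emph{worst-case} delay $\max_{q\in\mathit{Dag}_{[u,v]}}d_1(q)$ on the edge $(u,v)$, so that a node segment faithfully realizes $p_{[u,v]}$ precisely when $p_{[u,v]}$ is itself a maximal-delay $M_2$-shortest path. The remaining bookkeeping --- the induction and the concluding contradiction --- is then routine.
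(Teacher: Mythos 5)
Your proposal is correct and follows essentially the same route as the paper: a greedy-exchange argument resting on the maximality of \texttt{1SegLongestPrefix} together with the fact that a suffix of a single-segment-encodable subpath is itself single-segment-encodable. Your explicit closure lemma is in fact slightly more careful than the paper's version, which only treats the node-segment case and states the delay inequality as $d(p_{[s'_k,s_{k+1}]})\leq SR(s'_k,s_{k+1})$ rather than deriving the required $SR(s'_k,s_{k+1})\leq d(p_{[s'_k,s_{k+1}]})$ via the max-delay argument you give.
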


\begin{proof}
  Let us consider two segment lists $S = ((S_i)_{0 \leq i \leq l-1})$ and $S' = ((S'_i)_{0 \leq i \leq l'-1})$, both encoding $p$. Let $S' = encode(p)$ (\ie $S'$ is the result of Alg.~\ref{algo:encode}).

  For the sake of contradiction, let us assume that $l < l'$. Then, there must exist $k$, $0 \leq k \leq l-1$, such that 
  \begin{enumerate}
    \itemsep0em
    \item $s_k$ appears before, or at the same place as $s'_k$ in path $p$
    \item $s_{k+1}$ appears strictly after $s'_{k+1}$ in path $p$.
  \end{enumerate}

  Let us start by considering that $s_k = s'_k$. Then, 2. contradicts with the fact that $S'_k$ is a path encoding the longest prefix of $p$ (recall that $S_k'$ is the output of procedure 1SegLongestPrefix on $s_k'$).

  Otherwise, $s_k$ appears strictly before $s'_k$ in path $p$. Following the definition of encoding, $p_{[s_k, s_{k+1}]} \subset \mathit{Dag}_{[s_k, s_{k+1}]}$. In other words, $p_{[s_k, s_{k+1}]}$ is a shortest path (regarding $M_2$).
  Since $s'_k \in p_{\left[s_k, s_{k+1}\right]}$, then, $p_{\left[s_k', s_{k+1}\right]} $ is a shortest path as well, in other words
  \[
    p_{\left[s_k', s_{k+1}\right]} \subset \mathit{Dag}_{[s'_k, s_{k+1}]}
  \] 
  and by definition of $SR$,
  \[
    d(p_{[s'_k, s_{k+1}]}) = d(s'_k, s_{k+1}) \leq SR(s'_k, s_{k+1})
  \] 
 meaning that $(\mathit{NodeSeg}, s'_k, s_{k+1})$ encodes $p_{[s'_k, s_{k+1}]}$ which contradicts the fact that $S'_k$ is the longest encoding.

\end{proof}

Note that within our code, the conversion algorithm is implemented in a slightly different fashion for performance sake. During the exploration of the graph, path are extended edges by edges. To avoid recomputing the encoding of a path from scratch after each extension, we save for each path states allowing to efficiently update, for a given path $p$, the number of segments required if the latter is extended by an edge $e$. 


\subsection{Dominancy checks}

Although the algorithms presented in the previous section allow to find the number of segments to encode a path, it is not sufficient to correctly take into consideration the number of segments when computing paths. Indeed, conventional DCLC algorithm tend to only extend non-dominated paths. However, some precautions must be taken when comparing the number of segments of two paths. Indeed, because of the peculiar nature of the metric, the latter may evolve differently. As a consequence, paths that seem dominated must sometimes be extended nevertheless, as they may end up requiring less segments than their dominating competitors.

To illustrate this issue, let us consider two paths $p$ and $p'$. Let $S = (S_i)_{0 \leq i \leq l-1}= $ encode$(p)$ and $S' = (S'_i)_{0 \leq i \leq l'-1}= $ encode($p'$). 
Let $S'_{l-1} = (\mathit{AdjSeg}, s'_{l-1}, s'_l)$ and $S_{l-1} = (\mathit{NodeSeg}, s_{l-1}, s_l)$. Then, intuitively, if $p'$ is to be extended by an edge $e = (u,v)$, a new segment is mandatory, as $S'_{l-1}$ may not encode more than a single link. However, a new segment is not necessarily mandatory to extend $p$, as $S_{l-1}$ may very well be \textit{extended} to include $e$, if $p_{[s_{l-1}, v]} \in \mathit{Dag}_{[s_{l-1}, v]}$ and if $SR(s_{l-1},v) < d(p_{[s_l, v]})$.

Note that this effect may also occur (although less obviously) even when both $p$ and $p'$ end with a node segment. Indeed, it is possible that 
$p_{[s_{l-1}, v]} \in \mathit{Dag}_{[s_{l-1}, v]}$ while $p_{[s'_{l-1}, v]} \notin \mathit{Dag}_{[s'_{l-1}, v]}$

Consequently, even though a path may seem dominated, this may not be definitive. Thus, some paths that \textit{seem} dominated must be nevertheless be extended. More precisely, a path $p$ \emph{should not} be considered dominated by a path $p'$, regardless of their cost and delay, if the three following conditions are met:

\begin{enumerate}
  \itemsep0em
  \item $d_0(p) = d_0(p')$
  \item $s_{l-1} \neq s'_{l-1}$
  \item $t_{l-1} = \mathit{NodeSeg}$
\end{enumerate}

By modifying the dominancy condition as such, and relying on the conversion algorithms presented, a standard multi-metric path computation algorithm is able to correctly take into account the number of segments as another path metric. Although the overhead induced by our conversion algorithm should be limited, note that this dominancy condition implies that some paths that are ultimately dominated may have to be extended nonetheless,
increasing the overall worst-case complexity of the algorithm. The optimal usage of the SR Graph thus depends on the overhead of the conversion and extended dominancy check, compared to the cost of exploring a fully meshed graph.

\section{\bestcop upper bound with multi-threading}\label{upper_bound}
  
\begin{figure}
  \centering
  \includegraphics[scale=0.50]{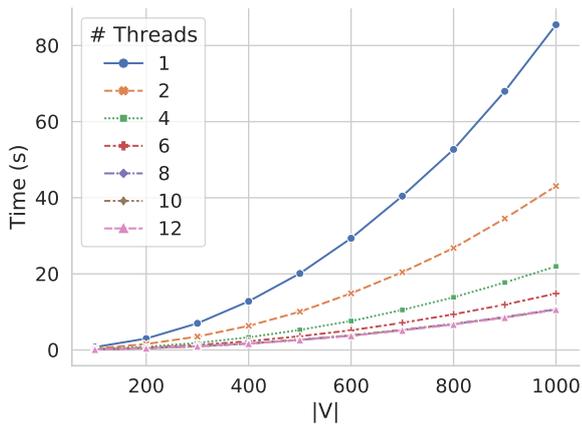}
  \caption{Upper bound of \bestcop execution time regarding the number of nodes and threads.}
  \label{fig:CT}
\end{figure}

To test \bestcop upper bound, we force it to explore its full iteration space. In other words, the algorithm behaves as if $|V|^2 \times L \times \Gamma$ distances have to be extended at each iteration. 
The results are shown in Fig.~\ref{fig:CT} for an increasing number of nodes and threads.
\bestcop does not exceed 84s when using a single thread, considering $|V| = 1000$ and $L = 2$, the average number of parallel links in the transformed graph.
This time, reasonable given the unrealistic nature of the experience, is significantly reduced when relying on multi-threading.
Using 8 threads, \bestcop execution time decreases to around 10s, highlighting both the parallelizable nature of our algorithm and its inherently good performance (past this number of threads, there is no more speedup as we exceed the number of physical cores of our machine). Additional experiments conducted on a high-performance grid show that \bestcop reaches a speedup factor of 23 when run on 30 cores.

\end{document}